\newtheorem{theorem}{Theorem}[section]
\newtheorem{claim}{Claim}[theorem]
\newtheorem{lemma}[theorem]{Lemma}
\newtheorem{conjecture}[theorem]{Conjecture}
\theoremstyle{definition}
\newcommand{\bR}{\mathbb R}
\newcommand{\bZ}{\mathbb Z}
\newcommand{\cC}{\mathcal{C}}
\newcommand{\cG}{\mathcal{G}}
\newcommand{\cZ}{\mathcal{Z}}
\newcommand{\eps}{\varepsilon}
\newcommand{\half}{\tfrac{1}{2}}
\DeclareMathOperator{\GF}{GF}
\newcommand{\prob}{\mathbf{P}}
\newcommand{\ev}{\mathbf{E}}
\newcommand{\norm}[1]{\left\lVert #1 \right\rVert}
\newcommand{\allones}{\overline{\mathbf{1}}}
\begin{document}
\sloppy
\allowdisplaybreaks
\title[The MLD threshold for cycle codes ]{The maximum-likelihood decoding threshold for cycle codes of graphs}

\author[Nelson]{Peter Nelson}
\author[Van Zwam]{Stefan H.M. van Zwam}
\thanks{ This research was partially supported by a grant from the
Office of Naval Research [N00014-12-1-0031] and by grants from the National Science Foundation Division of Mathematical Sciences [1501985, 1500343]}
\begin{abstract}
	For a class $\cC$ of binary linear codes, we write $\theta_{\cC}\colon (0,1) \to [0,\tfrac{1}{2}]$ for the \emph{maximum-likelihood decoding threshold function} of $\cC$, the function whose value at $R \in (0,1)$ is the largest bit-error rate $p$ that codes in $\cC$ can tolerate with a negligible probability of maximum-likelihood decoding error across a binary symmetric channel. We show that, if $\cC$ is the class of cycle codes of graphs, then $\theta_{\cC}(R) \le \tfrac{(1-\sqrt{R})^2}{2(1+R)}$ for each $R$, and show that equality holds only when $R$ is asymptotically achieved by the cycle codes of regular graphs. 
\end{abstract}

\maketitle

\section{Introduction}

	For a class $\cC$ of binary linear codes and for some rate $R \in (0,1)$, we consider the \emph{maximum-likelihood decoding threshold} $\theta_{\cC}(R)$ for $\cC$ at $R$. This is the unique $\theta \in [0,\half]$ such that
	\begin{itemize}
		\item for each $p \in (0,\theta)$ and all $\eps > 0$, given a binary symmetric channel of bit-error rate $p$, there exists a code $C \in \cC$ of rate at least $R$ such that the probability of a error in maximum-likelihood decoding on $C$ is at most $\eps$, and
		\item for each $p \in (\theta,\half)$ there exists $\eps > 0$ such that, given a binary symmetric channel of bit-error rate $p$, for each code $C \in \cC$ of rate at least $R$ the probability of an error in maximum-likelihood decoding on $C$ is at least $\eps$. 
	\end{itemize} 
	The function $\theta_{\cC}(R)$ is the \emph{threshold function} for $\cC$; it essentially measures the maximum bit-error rate that can be `tolerated' by rate-$R$ codes in $\cC$ with vanishing probability of a decoding error. Our main result proves an upper bound on this function for the class $\cG$ of cycle codes of graphs: 
	\begin{theorem}\label{mainintro}
		If $\cG$ is the class of cycle codes of graphs and $R \in (0,1)$, then $\theta_{\cG}(R) \le \tfrac{(1-\sqrt{R})^2}{2(1+R)}$. If equality holds, then $R = 1 - \tfrac{2}{d}$ for some $d \in \bZ$. 
	\end{theorem}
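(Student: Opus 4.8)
The plan is to pass from the decoding problem to the weight enumerator of the cycle code, and thence to an extremal problem for the ferromagnetic Ising partition function of the underlying graph, whose optimum over degree sequences is attained at regular graphs. Write $\beta=\beta(p)=2\sqrt{p(1-p)}$ for the Bhattacharyya parameter of $\mathrm{BSC}(p)$ and, for a graph $G$ with $v$ vertices and $n$ edges and cycle code $C(G)$, write $W_G(\beta)=\sum_{c\in C(G)}\beta^{\,w(c)}$ for the weight enumerator evaluated at $\beta$ ($w(\cdot)$ being Hamming weight). The starting point is the identity
\[
W_G(\beta)\;=\;2^{-v}\!\!\sum_{\sigma\in\{\pm1\}^{V(G)}}\ \prod_{uv\in E(G)}\!\bigl(1+\beta\,\sigma_u\sigma_v\bigr)\;=\;\frac{(1+\beta)^{n}}{2^{v}}\sum_{S\subseteq V(G)}\Bigl(\tfrac{1-\beta}{1+\beta}\Bigr)^{|\delta(S)|},
\]
proved by expanding the product (the monomial indexed by $H\subseteq E(G)$ survives the average over $\sigma$ exactly when $H$ is an even subgraph), or equivalently by MacWilliams duality against the cut space of $G$; it exhibits $W_G(\beta)$, up to an explicit normalisation, as the Ising partition function of $G$, the second form being its expansion over the vertex subsets (``droplets'') $S$.

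Next I would link $W_G(\beta)$ with the maximum--likelihood error probability $P_e$. One direction is the Bhattacharyya (union) bound $P_e\le W_G(\beta)-1$; the direction I need is that $P_e$ stays bounded away from $0$ whenever $W_G(\beta)$ grows exponentially in $n$. For this I would run a second--moment argument — after, if necessary, restricting to a well--spread subfamily of codewords — on $N:=\#\{c\in C(G)\setminus\{0\}:\ w(e\oplus c)\le w(e)\}$, where $e$ is the channel error: $\ev N$ equals $W_G(\beta)-1$ up to polynomial factors, while $\ev[N^2]=\sum_{c,c'}\prob\bigl[\,w(e\oplus c)\le w(e)\ \text{and}\ w(e\oplus c')\le w(e)\,\bigr]$ can, by linearity of $C(G)$, be re--expressed through the same cut--sum applied to the even subgraphs $c$ and $c\oplus c'$; this should give $\ev[N^2]=O\bigl((\ev N)^2\bigr)$, whence $P_e\ge\prob[N\ge1]$ is bounded away from $0$ by Paley--Zygmund.

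The crux — and the step I expect to be the main obstacle — is an extremal lower bound for $W_G(\beta)$. Let $\nu(G)=\frac{\sum_u d_u(d_u-1)}{\sum_u d_u}=\frac{\sum_u d_u^2}{2n}-1$ be the mean excess degree of $G$; then $\nu(G)=\bigl(\bar d(G)-1\bigr)+\mathrm{Var}(d)/\bar d(G)$, where $\bar d(G)=2n/v$ and $\mathrm{Var}(d)=\tfrac1v\sum_u\bigl(d_u-\bar d(G)\bigr)^2$. The lemma to establish is: there is $\varepsilon(\beta,\nu)>0$ for $\beta>\nu^{-1}$ with $W_G(\beta)\ge e^{\,\varepsilon(\beta,\nu(G))\,v}$ for every graph $G$ — equivalently, that the Ising model on $G$ is in its ordered phase once $\beta$ exceeds the reciprocal mean excess degree, the bound on that threshold being attained (only) by locally tree--like regular graphs. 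I would approach this through a random--cluster (Fortuin--Kasteleyn) lower bound for the partition function, or an interpolation argument, since a naive droplet count only recovers the mean--field, rather than the locally tree--like, threshold. Granting the lemma, the theorem follows: one may assume $G$ connected (pass to a large connected component, of rate $\ge R$ as well), so rate $\ge R$ gives $v-1\le(1-R)n$ and hence $\bar d(G)=2n/v\ge\frac{2}{1-R}-o(1)$, and Cauchy--Schwarz gives $\nu(G)\ge\bar d(G)-1\ge\frac{1+R}{1-R}-o(1)$. Thus $\nu(G)^{-1}\le\frac{1-R}{1+R}+o(1)$ for every rate--$R$ cycle code, so for $\beta>\frac{1-R}{1+R}$ we have $W_G(\beta)$ exponentially large and $P_e$ bounded away from $0$; rewriting $2\sqrt{p(1-p)}>\frac{1-R}{1+R}$ as $p>\frac{(1-\sqrt R)^2}{2(1+R)}$ yields $\theta_{\cG}(R)\le\frac{(1-\sqrt R)^2}{2(1+R)}$.

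For the equality clause: if the bound is attained there is a sequence $G_k$ of connected rate--$R$ graphs whose cycle codes have vanishing ML error, and the argument just given (in contrapositive form) forces $\nu(G_k)\to\frac{1+R}{1-R}$. Since $\nu(G_k)=\bigl(\bar d(G_k)-1\bigr)+\mathrm{Var}(d)/\bar d(G_k)$ with $\bar d(G_k)\ge\frac{2}{1-R}-o(1)$, this forces both $\bar d(G_k)\to\frac{2}{1-R}$ and $\mathrm{Var}(d)\to0$ for $G_k$; as the $d_u$ are integers, a sequence of graphs whose average degree tends to $\frac{2}{1-R}$ with degree variance tending to $0$ can exist only if $\frac{2}{1-R}\in\bZ$, i.e.\ $R=1-\frac2d$ for some $d\in\bZ$. \QED
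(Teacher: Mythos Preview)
Your route --- through the weight enumerator and the Ising partition function rather than the paper's covering trees and non-backtracking walks --- is genuinely different, but it has two substantial gaps that you yourself flag. The first is the ``crux lemma'': that $W_G(\beta) \ge e^{\varepsilon v}$ whenever $\beta > 1/\nu(G)$. You do not prove this, only gesture at random-cluster or interpolation methods. This amounts to showing that the ferromagnetic Ising model on $G$ is in its ordered phase once $\tanh J$ exceeds the reciprocal mean excess degree, \emph{uniformly over all graphs} with given $\nu(G)$ and with the sharp locally-tree-like constant rather than the mean-field one; even for regular graphs this is nontrivial, and for irregular graphs with the specific constant $\nu(G)$ it is not a result one can simply cite. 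The paper replaces this step entirely: its Theorem~\ref{maintech} is proved by lifting to the covering tree $\Gamma_{e_0}(G)$ and running a second-moment argument there (Lemma~\ref{uniformity}), using the Perron eigenvector of the non-backtracking matrix $B(G)$ to define a unit flow so that the tree structure makes the second moment explicitly summable.

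The second gap is the passage from $W_G(\beta)$ exponentially large to $P_e$ bounded away from zero. You propose Paley--Zygmund on $N = \#\{c \ne 0 : w(e \oplus c) \le w(e)\}$, but the required bound $\ev[N^2] = O\bigl((\ev N)^2\bigr)$ is asserted, not proved: the events $\{w(e \oplus c) \le w(e)\}$ for different $c$ are correlated through the overlap structure of pairs of codewords, and controlling that overlap structure is precisely the hard combinatorics. The paper sidesteps this by identifying $P_e$ directly with the graph parameter $f_p^{1/2}(G)$ and carrying out the second moment on the covering tree, where paths to distinct vertices are independent past their join $x \wedge y$. Your final deduction --- using $\nu(G) \ge \bar d(G) - 1$ with equality only when the degree variance vanishes, forcing $\tfrac{2}{1-R} \in \bZ$ --- does parallel the paper's use of $\lambda_*(\mu) \ge \mu - 1$ (Lemma~\ref{lambdastar}), but the analytic core preceding it is not established.
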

	
	This generalises a result of Decreusefond and Z\'{e}mor [\ref{dz}], who proved the same upper bound for the class of cycle codes of regular graphs. Our proof follows theirs conceptually, although our exposition and notation are somewhat different. The proof in [\ref{dz}] implicitly involves a problem of enumerating `non-backtracking' walks that is trivial for regular graphs but not in general; much of the original material in our proof is related to this difficulty. 
	
	When $R = 1 - \tfrac{2}{d}$ for some $d \in \bZ$ (that is, when the cycle codes of large $d$-regular graphs have rate close to $R$) our theorem does not improve the bound $\theta_{\cG}(R) \le \tfrac{(1-\sqrt{R})^2}{2(1+R)}$. In this case, however, the bound is known to be best-possible; Z\'{e}mor and Tillich [\ref{zt97}] showed, when $d-1$ is one of various prime powers, that certain families of $d$-regular Ramanujan graphs have cycle codes attaining this threshold (that is, can tolerate a bit-error rate of $p$ for any $p < \tfrac{(1-\sqrt{R})^2}{2(1+R)}$), and later random constructions due to Alon and Bachmat [\ref{ab06}] can be demonstrated to give the same result for all $d \ge 3$. Combining these constructions with Theorem~\ref{mainintro}, we have the following:
	
	\begin{theorem}\label{intval}
		If $\cG$ is the class of cycle codes of graphs, and $R = 1 - \tfrac{2}{d}$ for some integer $d \ge 3$, then $\theta_{\cG}(R) = \tfrac{(1-\sqrt{R})^2}{2(1+R)}$.
	\end{theorem}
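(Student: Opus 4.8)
The plan is to combine Theorem~\ref{mainintro} with known constructions of graphs with good cycle codes. Write $\beta := \tfrac{(1-\sqrt{R})^2}{2(1+R)}$. Since $R = 1 - \tfrac{2}{d}$, Theorem~\ref{mainintro} immediately gives $\theta_{\cG}(R) \le \beta$, so the whole content is the reverse inequality $\theta_{\cG}(R) \ge \beta$. By the definition of $\theta_{\cG}$ it suffices, for this, to exhibit for every $p < \beta$ and every $\eps > 0$ a graph whose cycle code has rate at least $R$ and probability of maximum-likelihood decoding error at most $\eps$ on a binary symmetric channel of bit-error rate $p$; I would do this by producing a sequence $(G_n)$ of connected $d$-regular graphs with $|V(G_n)| \to \infty$ and $\mathrm{girth}(G_n) \to \infty$ whose cycle codes have decoding error tending to $0$ on a channel of any fixed bit-error rate $p < \beta$. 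The rate requirement then comes for free: a connected $d$-regular graph on $n$ vertices has $\tfrac{dn}{2}$ edges and cycle space of dimension $\tfrac{dn}{2} - n + 1$, so its cycle code has rate $1 - \tfrac{2}{d} + \tfrac{2}{dn} > R$.

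For the graphs themselves: when $d - 1$ is a prime power I would take $(G_n)$ to be a family of $d$-regular Ramanujan graphs, for which Z\'{e}mor and Tillich [\ref{zt97}] prove directly that the cycle codes tolerate every bit-error rate below $\beta$; for arbitrary $d \ge 3$ I would instead use the random $d$-regular constructions of Alon and Bachmat [\ref{ab06}], which have girth tending to infinity. In both cases the decoding bound rests on an estimate of the weight enumerator $(A_w)$ of the cycle code, where $A_w$ is the number of even subgraphs of $G_n$ with exactly $w$ edges. A maximum-likelihood error occurs only when the random edge-error set $F$ satisfies $|F \triangle c| \le |F|$, equivalently $|F \cap c| \ge \half |c|$, for some nonzero element $c$ of the cycle space; a union bound together with the estimate $\prob[\mathrm{Bin}(w,p) \ge \half w] \le (4p(1-p))^{w/2}$ then bounds the error probability by $\sum_{w \ge 1} A_w (4p(1-p))^{w/2}$. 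The value $\beta$ is precisely the point at which this series would diverge: the typical number of circuits of length $w$ in these graphs is of order $(d-1)^w/(2w)$, and $\sum_w \tfrac{(d-1)^w}{2w}(4p(1-p))^{w/2}$ converges exactly when $4(d-1)^2 p(1-p) < 1$, which, since $(d-1)^2 - 1 = d(d-2)$, is equivalent to $p < \beta$.

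The delicate point --- and the content one must extract from [\ref{zt97}] and [\ref{ab06}] --- is that the estimate goes through for \emph{every} $p < \beta$, not merely for $p$ bounded below $\beta$. Two features are needed. First, the terms with $w = o(n)$ must contribute negligibly: there $A_w$ is governed by the single-circuit count and vanishes for $w$ below the girth, so since $\mathrm{girth}(G_n) \to \infty$ the geometric tail $\sum_{w \ge \mathrm{girth}(G_n)} \bigl(4(d-1)^2 p(1-p)\bigr)^{w/2}$ tends to $0$ whenever $p < \beta$; this is also why the construction must have growing girth, since a graph containing a fixed short circuit incurs a constant probability of decoding error. Second, the terms with $w = \Theta(n)$ must stay bounded: here one needs $A_w^{1/w} \to d - 1$ as $w \to \infty$ with $w/|E(G_n)| \to 0$, and that the constraint $A_w^{1/w}(4p(1-p))^{1/2} < 1$ forced in this limit --- namely $p < \beta$ --- remains the binding one at all weights $w = \Theta(n)$. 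Pinning down these weight-enumerator asymptotics, via the non-backtracking spectrum for Ramanujan graphs and via the subgraph-counting estimates of [\ref{ab06}] in general, and confirming that the resulting threshold is the sharp value $\beta$, is the crux; assembling the two families of constructions into one statement valid for all $d \ge 3$ and deducing $\theta_{\cG}(R) \ge \beta$ from the definition is then routine.
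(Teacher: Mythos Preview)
Your proposal is correct and follows exactly the paper's approach: the paper does not give a standalone proof of this theorem but simply states it as the combination of the upper bound from Theorem~\ref{mainintro} with the lower-bound constructions of Z\'{e}mor--Tillich [\ref{zt97}] (for $d-1$ a prime power) and Alon--Bachmat [\ref{ab06}] (for general $d \ge 3$). Your sketch of the weight-enumerator/union-bound mechanism behind those lower bounds goes further than the paper itself, which merely cites the results, but it is consistent with the cited references and the overall argument is the same.
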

	Theorem~\ref{mainintro} implies that this equality holds for no other $R \in (0,1)$; this can be interpreted as a statement that the cycle codes of regular graphs are `best' among all cycle codes. 
	
	Theorem~\ref{mainintro} will be derived as a consequence of a stronger upper bound for $\theta_{\cG}$, given in Theorem~\ref{maintechnical}. While the bound in Theorem~\ref{maintechnical} is highly technical in its statement, we believe (Conjecture~\ref{mainconjecture}) that it is in fact the correct upper bound. 
	
	  

	\subsection*{Minor-Closed Classes}
 The main result of [\ref{nvz14}] shows that the failure of the cycle codes to be `asymptotically good' extends to every \emph{proper minor-closed} subclass of binary codes; that is, every proper subclass that is closed under puncturing and shortening. The proof uses a deep result in matroid structure theory due to Geelen, Gerards and Whittle [\ref{ggw}] that states, roughly, that the `highly connected' members of any such class of codes are close to being either cycle codes or their duals.  
	
	We believe that this paradigm that the members of any minor-closed subclass of binary codes are `nearly' cycle or cocycle codes will also apply to the threshold function. We predict that the threshold function $\theta_{\cG}(R)$ for any minor-closed class agrees with that of either the class $\cG$ of cycle codes or the class $\cG^*$ of cocycle codes. It is easily shown (see [\ref{ggw}]) that $\theta_{\cG^*}(R) = 0$ for all $R \in (0,1)$.  Geelen, Gerards and Whittle [\ref{ggw}] made the following striking conjecture:
	\begin{conjecture}
		Let $\cC$ be a proper subclass of the binary linear codes that is closed under puncturing and shortening. Either
		\begin{itemize}
			\item $\cG \subseteq \cC$ and $\theta_{\cC} = \theta_{\cG}$, or 
			\item $\theta_{\cC} = 0$. 
		\end{itemize}
	\end{conjecture}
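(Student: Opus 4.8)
The plan is to reduce the general structural dichotomy to the two known facts in the excerpt: that $\theta_{\cG}$ is the explicit function bounded in Theorem~\ref{mainintro} (and attained at the rates $1 - \tfrac{2}{d}$ by Theorem~\ref{intval}), and that $\theta_{\cG^*} = 0$. The backbone is the Geelen--Gerards--Whittle structure theorem for proper minor-closed (puncturing- and shortening-closed) classes $\cC$ of binary codes: every highly connected member of $\cC$ is, up to bounded perturbation and a bounded number of added/deleted coordinates, either a cycle code of a graph, a cocycle code of a graph, or lives in a class of bounded branch-width. First I would dispose of the case where $\cC$ does not contain $\cG$: then the structure theorem forces all highly connected members of $\cC$ to be close to cocycle codes or bounded-branch-width codes, and I would argue that in either subcase any rate-$R$ code in $\cC$ has a weight-$2$ (or at least short) dual codeword density forcing decoding failure at every $p > 0$, exactly as in the $\cG^*$ computation; hence $\theta_{\cC}(R) = 0$ for all $R$. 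Conversely, if $\cG \subseteq \cC$, then monotonicity of the threshold under containment of classes gives $\theta_{\cC} \ge \theta_{\cG}$ immediately, and the content is the reverse inequality.

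For the reverse inequality $\theta_{\cC} \le \theta_{\cG}$ when $\cC$ is a proper such class containing $\cG$, the key step is to show that adding a bounded perturbation, a bounded number of coordinates, or a bounded-branch-width ``noise'' component to a cycle code cannot raise the tolerable bit-error rate above $\theta_{\cG}(R)$. Concretely, I would take a sequence of codes $C_n \in \cC$ of rate $\ge R$ witnessing $\theta_{\cC}(R)$, apply the GGW structure theorem to extract from each $C_n$ a large highly connected minor that is within bounded distance of a cycle code $G_n$, and then transfer the decoding-error lower bound: since deleting/contracting a bounded number of coordinates and applying a bounded-rank perturbation changes the rate by $o(1)$ and changes the relevant weight-enumerator moments (or spectral quantities) in a controlled way, a family tolerating $p$ in $\cC$ would yield a family of \emph{cycle} codes tolerating $p - o(1)$, forcing $p \le \theta_{\cG}(R)$ by Theorem~\ref{mainintro}. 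This is really a ``robustness of the threshold under bounded surgery'' lemma, and it would need to be proved using whatever analytic machinery (second-moment / weight-enumerator or the non-backtracking-walk spectral method of Decreusefond--Z\'emor as adapted in this paper) underlies Theorem~\ref{maintechnical}.

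I expect the main obstacle to be precisely this transfer step: the GGW structure theorem is qualitative and gives control only on ``highly connected'' pieces after unbounded deletions of low-connectivity parts, whereas the threshold is a delicate quantitative invariant sensitive to the global weight distribution. Passing from ``$C_n$ has a large highly-connected cycle-code-like minor'' to a genuine statement about the maximum-likelihood decoding error of $C_n$ itself requires controlling how the discarded low-connectivity parts and the bounded perturbation interact with the noise channel — in particular ruling out that a sparse ``parasitic'' part of $C_n$ could locally boost error-correction. A plausible route is to observe that shortening/puncturing-closedness lets one restrict attention to the highly connected minor as an honest sub-instance (since the decoding error on $C_n$ is at least that on a punctured/shortened child, up to the rate loss), thereby converting the structural statement into an inequality $\theta_{\cC}(R) \le \theta_{\cG}(R')$ with $R' \ge R$; combined with monotonicity of $\theta_{\cG}$ and a limiting argument as the connectivity and size parameters grow, this should close the gap. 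The ``equality holds at no other $R$'' half of Theorem~\ref{mainintro} then ensures the dichotomy is genuinely two-valued rather than a spectrum.
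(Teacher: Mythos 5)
The statement you are attempting to prove is explicitly labelled a \emph{conjecture} in the paper: it is an open problem attributed to Geelen, Gerards and Whittle, and the authors offer no proof, remarking only that a proof ``would likely require a combination of the matroidal techniques in [11] and the algebraic and probabilistic ideas in this paper.'' So there is no argument in the paper to compare yours against, and your text should be judged as a standalone proof attempt. As such it is not a proof but a research plan, and you say so yourself: the paragraph beginning ``I expect the main obstacle to be precisely this transfer step'' identifies, correctly, the step that constitutes essentially the entire open content of the conjecture, and you do not supply it. The GGW structure theorem controls only the highly connected members of $\cC$, only up to a bounded-rank perturbation, and only qualitatively; the threshold $\theta_{\cC}(R)$ is a quantitative asymptotic invariant of \emph{all} rate-$R$ codes in $\cC$. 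Nothing in your sketch rules out that a low-connectivity code in $\cC$, or a bounded-rank perturbation of a cycle code (which can alter the weight distribution and minimum distance drastically — a rank-one perturbation can destroy or create arbitrarily many low-weight codewords), tolerates a strictly larger bit-error rate than any cycle code of the same rate. The proposed remedy — shorten/puncture down to the highly connected minor and argue the decoding error of $C_n$ is at least that of the minor — is also not justified: puncturing and shortening do not in general give monotone control of the ML-decoding error probability in the direction you need, and the rate bookkeeping ($R' \ge R$ versus the monotonicity of $\theta_{\cG}$, which is a \emph{decreasing} function of $R$) points the wrong way for the inequality you want.

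Two further specific gaps. First, in the case $\cG \not\subseteq \cC$, the structure theorem does not immediately say every member of $\cC$ is close to a cocycle code or has bounded branch-width — again it speaks only of highly connected members — so concluding $\theta_{\cC} = 0$ requires handling arbitrary members of $\cC$, e.g.\ via a decomposition into small-order pieces, which you do not describe. Second, the claim that a ``bounded number of added/deleted coordinates and a bounded-rank perturbation changes the relevant weight-enumerator moments in a controlled way'' is precisely the kind of robustness lemma that is not known and is likely as hard as the conjecture itself; asserting it does not discharge it. In short: the reduction scaffolding (monotonicity when $\cG \subseteq \cC$, invocation of GGW, the two-case split) is the natural and correct opening, but the proposal stops exactly where the proof would have to begin.
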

	In other words, the presence or absence of the class of cycle codes should be all that determines the threshold function for any minor-closed class. Proving this conjecture would likely require a combination of the matroidal techniques in [\ref{nvz14}] and the algebraic and probabilistic ideas in this paper.
	
\section{Preliminaries} 

	We give some basic definitions in coding theory that, together with the definition of threshold function in the introduction, are all that are required for this paper; a more comprehensive reference is found in [\ref{ms77}]. We also use some standard graph theory terminology from [\ref{diestel}] and [\ref{gr}]. 
	
	For integers $n \ge k \ge 0$, a \emph{binary linear $[n,k]$-code} is a $k$-dimensional subspace $C$ of some $n$-dimensional vector space $V$ over $\GF(2)$. We call the elements of $C$ \emph{codewords}. The \emph{rate} of $C$ is the ratio $R = \tfrac{k}{n}$.
	
	\subsection{Cycle codes}This paper is concerned solely with the cycle codes of graphs. For a finite  graph $G = (V,E)$, the \emph{cycle code} of $G$ is the subspace of $\GF(2)^{E}$ whose elements are exactly the characteristic vectors of cycles of $G$ (that is, edge-disjoint unions of circuits of $G$, or equivalently edge-sets of even subgraphs of $G$). We write $\cG$ for the class of all such codes; it is well-known that every cycle code is the cycle code of a connected graph. 
	
	If $G$ is connected, then its cycle code $C$ is a binary linear $[n,k]$-code, where $n = |E|$ and $k = |E| - |V| + 1$, giving $R = 1 - \tfrac{|V|}{|E|} + \tfrac{1}{|E|}$. The ratio $\tfrac{|V|}{|E|}$ is exactly $\tfrac{2}{\mu(G)}$, where $\mu(G)$ denotes the average degree of $G$; we adopt this notation $\mu(G)$ throughout the paper. The above formula implies that a large connected graph $G$ has a cycle code of rate $R \approx 1 - \tfrac{2}{\mu(G)}$. A simple `error-tolerance' parameter of $C$  is the minimum Hamming distance $d$ between two codewords of $C$; this is equal to the \emph{girth} of $G$ (the length of a shortest circuit of $G$) -- we will write $d(G)$ for the girth of a graph $G$. 
	
	\subsection{Maximum-likelihood decoding} Suppose that some codeword $c$ of a linear $[n,k]$-code $C \subseteq V$ is transmitted across a binary symmetric channel with bit-error rate $p \in (0,\half)$, giving some $x \in V$ obtained by switching the value of each entry of $c$ independently with probability $p$. \emph{Maximum-likelihood decoding} (abbreviated ML-decoding) is the process where, given $x$, we attempt to recover $c$ by choosing the codeword $c' \in C$ with the highest probability to have been sent, given that $x$ has been received. If this choice is ambiguous (that is, if this maximum is not unique) or gives  an incorrect answer (that is, if $c' \ne c$), then we say a \emph{decoding error} has been made; this occurs with some probability depending on $p$ and $C$ but, by linearity, not on the particular codeword $c$. In this particular setting of a constant bit-error probability $p < \half$ that behaves independently on each bit, ML-decoding is equivalent to \emph{nearest-neighbour} decoding, where $c'$ is simply chosen to be the closest codeword to $x$ in Hamming distance. We remark that our definition of ML-decoding deviates slightly from the standard one, in which a decoding error is also avoided with nonzero probability in the case of an ambiguous choice. This difference will not affect the asymptotic analysis with which we are concerned. 
		
	ML-decoding is hard for general binary codes [\ref{bmt78}], but an attractive property of cycle codes of graphs (and an important motivating factor for this paper) is that ML-decoding can be implemented efficiently for cycle codes using standard techniques in combinatorial optimization (see [\ref{nh81}]). This is the case because the probability of a decoding error can be understood purely graphically: if $C$ is the cycle code of a graph $G = (V,E)$ and codewords of $C$ are transmitted across a channel of bit-error rate $p \in (0, \half)$, then the probability of an ML-decoding error is exactly the probability, given a set $X \subseteq E$ formed by choosing each edge uniformly at random with probability $p$, that $X$ contains at least half of the edges of some circuit of $G$. Thus, to prove our main theorem, we study random subsets of edges of a graph. From this point on, given a set $E$ and some $p \in [0,1]$, we refer to a random set $X \subseteq E$ formed by including each element of $E$ independently at random with probability $p$ as a \emph{$p$-random subset of $E$}.  
		
\section{Non-backtracking walks}\label{walksection}

A \emph{non-backtracking walk} of length $\ell$ in a graph $G$ is a walk $(v_0,v_1,\dotsc,v_\ell)$ of $G$ so that $v_{i+1} \ne v_{i-1}$ for all $i \in \{1,\dotsc, \ell-1\}$. In all nontrivial cases, the number of such walks grows roughly exponentially in $\ell$; in this section we estimate the base of this exponent, mostly following ([\ref{alon}], Theorem 1).  

Let $G = (V,E)$ be a simple connected graph of minimum degree at least $2$. 
Let $\bar{E} = \{(u,v) \in V^2: u \sim_G v\}$ be the $2|E|$-element set of arcs of $G$. Let $B = B(G) \in \{0,1\}^{\bar{E} \times \bar{E}}$ be the matrix so that $B_{(u,v),(u',v')} = 1$ if and only if $u' = v$ and $u \ne v'$. It is easy to see that 
\begin{enumerate}
	\item\label{scd} $B$ is the adjacency matrix of a strongly connected digraph (essentially the `line digraph' of $G$), and 
	\item\label{nbw} For each integer $\ell \ge 1$, the entry $(B^{\ell})_{e,f}$ is  the number of non-backtracking walks of length $\ell+1$ in $G$ with first arc $(v_0,v_1) = e$ and last arc $(v_\ell,v_{\ell+1})= f$.
\end{enumerate}

 By (\ref{scd}) and the Perron-Frobenius theorem (see [\ref{gr}], section 8.8), there is a positive real eigenvalue $\lambda_*$ of $B$ and an associated positive real eigenvector $w_*$, so that $|\lambda_*| \ge |\lambda|$ for every eigenvalue $\lambda$ of $B$. Furthermore, by Gelfand's formula [\ref{gelfand}] we have $\lambda_* = \lim_{n \to \infty} \norm{B^n}^{1/n}$, where $\norm{B^n}$ denotes the sum of the absolute values of the entries of $B^n$. By (\ref{nbw}), the parameter $\lambda_* = \lambda_*(B(G))$ thus governs the growth of non-backtracking walks in $G$.
 
 Note that 
 $B^\ell$ has only nonnegative entries, so $\norm{B^\ell} = \allones^T B^{\ell} \allones$. Let $\mu = \mu(G) = \tfrac{1}{n}|\bar{E}|$ denote the average degree of $G$. The proof of Theorem 1 of [\ref{alon}] contains the following:
 
 \begin{lemma} Let $G$ be a connected graph of minimum degree at least $2$ and let $B = B(G)$. Then $\allones^TB^{\ell} \allones \ge (n\mu) \Lambda^{\ell}$, where 
 \[\Lambda = \Lambda(G) = \prod_{v \in V}(d_G(v)-1)^{d_G(v)/(n \mu)}.\]
 \end{lemma}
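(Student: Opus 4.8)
The plan is to interpret $\allones^T B^\ell \allones$ combinatorially and then apply a convexity argument with respect to a carefully chosen probability measure on non-backtracking walks. Write $N_\ell := \allones^T B^\ell \allones$; since $B^\ell$ is nonnegative we have $N_\ell = \norm{B^\ell} = \sum_{e,f \in \bar{E}}(B^\ell)_{e,f}$, which by~(\ref{nbw}) is exactly the number of non-backtracking walks of length $\ell+1$ in $G$, equivalently the number of sequences $(e_0,e_1,\dotsc,e_\ell)$ of arcs with $B_{e_{i-1},e_i} = 1$ for each $i \in \{1,\dotsc,\ell\}$. Writing $e_i = (v_i,v_{i+1})$, such a walk is a vertex sequence $(v_0,\dotsc,v_{\ell+1})$, and given $(v_0,\dotsc,v_i)$ with $i \ge 1$ the number of admissible choices of $v_{i+1}$ is exactly $d_G(v_i)-1$; since $G$ has minimum degree at least $2$ this is always at least $1$.

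I would then introduce the \emph{non-backtracking random walk} on $\bar{E}$: from an arc $(u,v)$, move to $(v,w)$ with $w$ chosen uniformly among the $d_G(v)-1$ neighbours of $v$ other than $u$, starting from an arc $X_0$ chosen uniformly from $\bar{E}$. By the minimum-degree hypothesis this is a well-defined Markov chain, and a one-line calculation shows that the uniform distribution on $\bar{E}$ is stationary for it; hence $X_i$ is uniform on $\bar{E}$ for every $i$, so the head $H_i$ of $X_i$ satisfies $\prob[H_i = v] = d_G(v)/(n\mu)$ for each $v \in V$. A length-$\ell$ trajectory $(X_0,\dotsc,X_\ell) = (e_0,\dotsc,e_\ell)$ occurs with probability $\tfrac{1}{n\mu}\prod_{i=0}^{\ell-1}(d_G(H_i)-1)^{-1}$, so each such trajectory contributes $\prob[(X_0,\dotsc,X_\ell) = (e_0,\dotsc,e_\ell)]\cdot\prod_{i=0}^{\ell-1}(d_G(H_i)-1) = \tfrac{1}{n\mu}$ to the expectation below; since trajectories correspond exactly to the $N_\ell$ walks counted above, we obtain
\[ N_\ell = n\mu\cdot\ev\!\left[\,\prod_{i=0}^{\ell-1}\bigl(d_G(H_i)-1\bigr)\right]. \]

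Finally I would bound this expectation from below by Jensen's inequality for the convex function $\exp$: since $\prod_{i=0}^{\ell-1}(d_G(H_i)-1) = \exp\bigl(\sum_{i=0}^{\ell-1}\log(d_G(H_i)-1)\bigr)$, linearity of expectation together with the distribution of the $H_i$ gives
\begin{align*}
\ev\!\left[\,\prod_{i=0}^{\ell-1}(d_G(H_i)-1)\right]
&\ge \exp\!\left(\sum_{i=0}^{\ell-1}\ev\bigl[\log(d_G(H_i)-1)\bigr]\right)\\
&= \exp\!\left(\ell\sum_{v\in V}\tfrac{d_G(v)}{n\mu}\log\bigl(d_G(v)-1\bigr)\right) = \Lambda^\ell,
\end{align*}
whence $N_\ell \ge n\mu\,\Lambda^\ell$, as required.

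I expect the only genuinely non-routine step to be the choice of probability measure. Averaging over non-backtracking walks with the uniform measure would fail, because under that measure the head at time $i$ is biased towards vertices with many non-backtracking continuations in a way that is awkward to control. The path measure of the non-backtracking random walk started from its uniform-on-arcs stationary distribution is exactly the right object: it makes each $H_i$ degree-biased, which is precisely the distribution built into $\Lambda$, so Jensen's inequality produces $\Lambda^\ell$ with no slack to lose. Once this measure is identified, verifying stationarity, deriving the identity for $N_\ell$, and applying convexity of $\exp$ are all short, so the care needed is in the bookkeeping (stationarity and the walk--trajectory correspondence) rather than in any real difficulty.
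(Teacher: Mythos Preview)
Your argument is correct. The paper does not give its own proof of this lemma but simply cites it from the proof of Theorem~1 in Alon, Hoory and Linial~[\ref{alon}]; your use of the non-backtracking random walk started from its arc-uniform stationary distribution, followed by Jensen's inequality for $\exp$, is precisely the argument used there.
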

  It follows in turn from this lemma that $\norm{B^\ell}^{1/\ell} \ge \Lambda(G)$, so $\lambda_*(B(G)) \ge \Lambda(G)$.  As observed in [\ref{alon}], the log-convexity of the function $(x-1)^x$ (for $x > 1$) implies that $\Lambda(G) \ge \mu(G)-1$. For each $x \in \bR$, let $\eta(x) = \min(x - \lfloor x \rfloor, \lceil x \rceil - x)$ denote the distance from $x$ to the nearest integer. The following lemma, which is proved by slightly improving the bound $\Lambda(G) \ge \mu(G)-1$ when $\mu(G)$ is not an integer, is an unilluminating exercise in calculus.
	
	\begin{lemma}\label{boundlambda}
		Let $\mu_0 \in \bR$ satisfy $\mu_0 \ge 2$ and let $G$ be a connected graph with minimum degree at least $2$ and average degree at least $\mu_0$. Then $\lambda_*(B(G)) \ge \mu_0-1 + \tfrac{\eta(\mu_0)^3}{8\mu_0^3}$. 
	\end{lemma}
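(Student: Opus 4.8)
The plan is to use the inequality $\lambda_*(B(G)) \ge \Lambda(G)$ established above, so that it suffices to prove $\Lambda(G) \ge \mu_0 - 1 + \delta$, where I abbreviate $\delta = \eta(\mu_0)^3/(8\mu_0^3)$. Writing $n = |V|$, $\mu = \mu(G) \ge \mu_0$, and $h(x) = x\log(x-1)$, we have $\log \Lambda(G) = \tfrac{1}{n\mu}\sum_{v \in V} h(d_G(v))$. A direct computation gives $h''(x) = \tfrac{x-2}{(x-1)^2}$, so $h$ is convex (though not strictly so at $x = 2$) on $[2,\infty)$; this is the log-convexity quoted before the lemma, and already yields $\Lambda(G) \ge \mu - 1 \ge \mu_0 - 1$ by Jensen. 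The task is to recover the extra term $\delta$ from the fact that $\mu_0$ — hence, roughly, the degree sequence — is not integral.

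Next I would introduce the piecewise-linear function $\ell \colon [2,\infty) \to \bR$ that agrees with $h$ at every integer $\ge 2$ and is affine between consecutive integers. Since $h$ is convex, the successive slopes $h(k+1) - h(k)$ of $\ell$ are non-decreasing, so $\ell$ is convex and $\ell \ge h$; as $\ell(d_G(v)) = h(d_G(v))$ for each (integer) degree, Jensen applied to $\ell$ gives $\tfrac{1}{n}\sum_v h(d_G(v)) = \tfrac1n\sum_v \ell(d_G(v)) \ge \ell(\mu)$, hence $\log\Lambda(G) \ge \ell(\mu)/\mu$. A short computation shows that $t \mapsto \ell(t)/t$ is non-decreasing on $[2,\infty)$: on each interval $(k,k+1)$ the derivative has the sign of $k\,h(k+1) - (k+1)h(k) = k(k+1)(\log k - \log(k-1)) \ge 0$, and $\ell(t)/t$ is continuous. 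Thus $\log\Lambda(G) \ge \ell(\mu)/\mu \ge \ell(\mu_0)/\mu_0$. If $\mu_0 \in \bZ$ then $\delta = 0$ and $\ell(\mu_0) = h(\mu_0)$, so we are done; otherwise write $k = \lfloor\mu_0\rfloor \ge 2$ and $s = \mu_0 - k \in (0,1)$, and set $g(s) = (1-s)h(k) + sh(k+1) - h(k+s) = \ell(\mu_0) - h(\mu_0) \ge 0$. Then $\log\Lambda(G) \ge \tfrac{h(\mu_0)+g(s)}{\mu_0} = \log(\mu_0-1) + \tfrac{g(s)}{\mu_0}$, and since $\log(\mu_0 - 1 + \delta) \le \log(\mu_0-1) + \tfrac{\delta}{\mu_0-1}$ by concavity of $\log$, it suffices to prove the one-variable inequality $g(s) \ge \tfrac{\mu_0\,\delta}{\mu_0 - 1}$.

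For the last step, Taylor's theorem with integral remainder gives the exact formula $g(s) = \int_0^1 \big(\min(s,t) - st\big)\,h''(k+t)\,dt$, so lower bounds on $h''(k+t)$ translate into lower bounds on $g(s)$ via $\int_0^1(\min(s,t)-st)\,dt = \tfrac{s(1-s)}{2}$ and $\int_0^1(\min(s,t)-st)\,t\,dt = \tfrac{s(1-s)(1+s)}{6}$. When $k \ge 3$ one has $h''(k+t) \ge \tfrac{k-2}{k^2}$ on $[0,1]$, giving $g(s) \ge \tfrac{(k-2)s(1-s)}{2k^2}$; when $k = 2$ one has $h''(2+t) = \tfrac{t}{(1+t)^2} \ge \tfrac t4$, giving $g(s) \ge \tfrac{s(1-s)(1+s)}{24}$. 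In each case the target inequality follows after bounding $\eta(\mu_0)^3 \le \tfrac12 s(1-s)$ (using $\eta(\mu_0) = \min(s,1-s) \le \tfrac12$) together with $\mu_0 < k+1$, whereupon it reduces to an elementary polynomial inequality in $k$ (respectively in $s$) that holds with substantial slack — confirming, incidentally, that the constant $8$ is far from optimal. I expect this final case analysis to be the only real obstacle: the degeneracy $h''(2) = 0$ is exactly why the interval $\mu_0 \in (2,3)$ cannot be handled with a uniform positive lower bound on $h''$ and instead needs the refined estimate $h''(2+t) \asymp t$, and it is also why one must pass through the piecewise-linear interpolant $\ell$ rather than estimating the Jensen gap for $h$ directly.
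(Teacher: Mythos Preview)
Your argument is correct and follows a genuinely different route from the paper's. The paper expands $h(d_i)$ (their $g$) by Taylor's theorem \emph{around $\mu$}, obtaining $h(d_i) = h(\mu) + h'(\mu)(d_i-\mu) + \tfrac12 h''(\xi_i)(d_i-\mu)^2$, and then proves a per-vertex lower bound $\tfrac12 h''(\xi_i)(d_i-\mu)^2 \ge \tfrac{\eta(\mu)^3}{8\mu^2}$ via a case split $d_i = 2$ versus $d_i \ge 3$; summing and exponentiating gives $\Lambda(G) \ge \mu - 1 + \tfrac{\eta(\mu)^3}{8\mu^3}$, and a final monotonicity check in $\mu$ passes from $\mu$ to $\mu_0$. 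Your approach instead first eliminates the degree sequence entirely: replacing $h$ by its piecewise-linear interpolant $\ell$ at the integers, Jensen gives $\log\Lambda(G) \ge \ell(\mu)/\mu$, and your monotonicity of $t \mapsto \ell(t)/t$ immediately reduces to the one-variable inequality $\ell(\mu_0) - h(\mu_0) \ge \tfrac{\mu_0}{\mu_0-1}\delta$. The Green's-function identity $g(s) = \int_0^1(\min(s,t)-st)\,h''(k+t)\,dt$ then makes the final estimate systematic. Both proofs ultimately run into the same degeneracy $h''(2) = 0$ and handle it by a case split (your $k=2$ versus $k\ge 3$; the paper's $d_i = 2$ versus $d_i \ge 3$), but your reduction to a single real variable before the case analysis is cleaner and makes the slack in the constant $8$ more transparent. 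The paper's approach has the minor advantage that the per-vertex bound is a self-contained statement, but overall your route is at least as efficient.
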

	\begin{proof}
		Let $n = |V(G)|$, let $d_1, \dotsc, d_n$ be the degrees of the vertices of $G$, and let $\mu = \tfrac{1}{n}\sum_{i=1}^nd_i \ge \mu_0$ be the average degree of $G$. Let $\eta = \eta(\mu)$; note that $\mu \ge 2 + \eta$. Define $g\colon (1,\infty) \to \bR$ by $g(x) = x \ln (x-1)$; observe that $g'(x) = \tfrac{x}{x-1} + \ln(x-1)$ and $g''(x) = \tfrac{x-2}{(x-1)^2}$. We have 
		$\ln (\Lambda(G)) = \tfrac{1}{n \mu} \sum_{i = 1}^n g(d_i)$; for each $i$, Taylor's theorem gives 
		\[g(d_i) = g(\mu) + g'(\mu)(d_i-\mu) + \tfrac{1}{2}g''(\xi_i)(d_i-\mu)^2\]
		for some $\xi_i$ between $d_i$ and $\mu_0$. We now estimate the `error' terms.
		\begin{claim}
			$\tfrac{1}{2}g''(\xi_i)(d_i-\mu)^2 \ge \tfrac{\eta^3}{8\mu^2}$ for each $i$. 
		\end{claim}
		\begin{proof}[Proof of claim:]
		First suppose that $d_i = 2$. Then $g(d_i) = 0$, so 
		\begin{align*} \tfrac{1}{2}g''(\xi_i)(2-\mu)^2 &= -g(\mu) - g'(\mu)(2-\mu)  \\
		&= (\mu-2)\left(\tfrac{\mu}{\mu-1} + \ln(\mu-1)\right) - \mu \ln(\mu-1)\\
		&= \tfrac{\mu(\mu-2)}{\mu-1} - 2 \ln(\mu-1).
		\end{align*}
		Note that the above expression is equal to $1.174\dotsc > 1$ for $\mu = \tfrac{7}{3}$, and is increasing in $\mu$ for $\mu \in (2,\infty)$. If $\mu \ge \tfrac{7}{3}$ then we therefore have $\tfrac{1}{2}g''(\xi_i)(2-\mu)^2 > 1$. If $\mu < \tfrac{7}{3}$ then $\mu = 2 + \eta$ and $\eta < \tfrac{1}{3}$, so 
		\begin{align*}
			\tfrac{\mu(\mu-2)}{\mu-1} - 2 \ln(\mu-1) &= \tfrac{\eta(2 + \eta)}{1 + \eta} - 2 \ln(1 + \eta) \\
			&\ge \tfrac{\eta(2 + \eta)}{1 + \eta} - 2(\eta - \tfrac{1}{2} \eta^2 + \tfrac{1}{3}\eta^3)\\
			&= \tfrac{\eta^3}{3(1+\eta)}(1-2\eta)\\
			&> \tfrac{1}{12}\eta^3,
		\end{align*}
		where the last inequality uses $\eta < \tfrac{1}{3}$. Therefore if $d_i = 2$ we have $\tfrac{1}{2}g''(\xi_i)(d_i-\mu)^2 \ge \min(1,\tfrac{1}{12}\eta^3) = \tfrac{1}{12}\eta^3 >  \tfrac{\eta^3}{8\mu^2}$. 
		
		Suppose that $d_i \ge 3$. Since $\xi_i$ is between $\mu$ and $d_i$, we have $\xi_i \ge \min(d_i,\mu) \ge \min(3,\mu)$, so $\xi_i - 2 \ge \eta$. Therefore $g''(\xi_i) \ge \tfrac{\eta}{(\xi_i-1)^2} > \tfrac{\eta}{\xi_i^2}$. Thus, using $\xi_i \le \max(\mu,d_i)$, we have
		\[\tfrac{1}{2}g''(\xi_i)(d_i-\mu)^2 \ge \frac{\eta(d_i-\mu)^2}{2\xi_i^2} \ge \frac{\eta(d_i-\mu)^2}{2\max(\mu,d_i)^2}.\]
		It is easy to show, since $d_i \in \bZ$, that $\left|\tfrac{d_i-\mu}{\max(\mu,d_i)}\right| \ge \tfrac{\eta}{\mu + \eta} \ge \tfrac{\eta}{2\mu}$, so $\tfrac{1}{2}g''(\xi_i)(d_i-\mu)^2 \ge \tfrac{\eta^3}{8 \mu^2}$ and the claim follows. 
		\end{proof}
		
		Using the claim, we have 
		\begin{align*}
			\ln(\Lambda(G)) &= \frac{1}{n\mu} \sum_{i = 1}^n g(d_i)\\
			&= \frac{1}{n\mu}\sum_{i=1}^n \left(g(\mu) + g'(\mu)(d_i-\mu) + \tfrac{1}{2}g''(\xi_i)(d_i-\mu)^2\right)\\
			&= \frac{1}{n\mu}\left(n g(\mu) + \sum_{i=1}^n\tfrac{1}{2}g''(\xi_i)(d_i-\mu)^2\right)\\
			&\ge \ln(\mu-1) + \frac{1}{n\mu}\left(\frac{n\eta^3}{8\mu^2}\right)\\
			&= \ln(\mu-1) + \frac{\eta^3}{8\mu^3}.
		\end{align*}
		So $\Lambda(G) \ge (\mu-1)\exp\left(\tfrac{\eta^3}{8\mu^3}\right) \ge \mu-1 + (\mu-1)\left(\frac{\eta^3}{8\mu^3}\right) \ge \mu-1 + \frac{\eta^3}{8\mu^3}$. One easily checks that the function $h(y) = y-1 + \tfrac{\eta(y)}{8y^3}$ is strictly increasing on $(2,\infty)$; since $\mu \ge \mu_0$ and $\lambda_*(B(G)) \ge \Lambda(G)$, it follows that $\lambda_*(B(G)) \ge \mu_0-1 + \tfrac{\eta(\mu_0)^3}{8\mu_0^3}$, as required.  
	\end{proof} 
	
	For each $\mu \ge 2$, let $\cG_{\mu}$ denote the class of connected graphs with average degree at least $\mu$ and minimum degree at least $2$. For every integer $n \ge \mu+1$, let \[\lambda_*(\mu;n) = \inf\{\lambda_*(B(G))\colon G \in \cG_\mu, |V(G)| = n\},\] noting that this infimum is finite since $K_n \in \cG_{\mu}$ for all $n \ge \mu+1$. Define $\lambda_*\colon [2,\infty) \to \bR$ by $\lambda_*(\mu) = \liminf_{n \to \infty} \lambda_*(\mu;n)$. The following is immediate from Lemma~\ref{boundlambda}.
	
	\begin{lemma}\label{lambdastar}
		 $\lambda_*(\mu) \ge \mu-1$. If equality holds, then $\mu \in \bZ$. 
	\end{lemma}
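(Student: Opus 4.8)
The plan is to observe that Lemma~\ref{boundlambda} already supplies a lower bound on $\lambda_*(B(G))$ that is \emph{uniform} over all graphs $G$ in $\cG_\mu$ — in particular independent of $|V(G)|$ — and then simply to pass this bound through the $\inf$ and $\liminf$ that define $\lambda_*(\mu)$.

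Concretely, I would fix $\mu \ge 2$ and apply Lemma~\ref{boundlambda} with $\mu_0 = \mu$. By definition every $G \in \cG_\mu$ is connected with minimum degree at least $2$ and average degree at least $\mu$, so the lemma yields $\lambda_*(B(G)) \ge \mu - 1 + \tfrac{\eta(\mu)^3}{8\mu^3}$. Since this estimate holds for every such $G$ with no reference to the number of vertices, taking the infimum over all $G \in \cG_\mu$ with $|V(G)| = n$ gives $\lambda_*(\mu;n) \ge \mu - 1 + \tfrac{\eta(\mu)^3}{8\mu^3}$ for every admissible $n$; taking $\liminf_{n \to \infty}$ then gives $\lambda_*(\mu) \ge \mu - 1 + \tfrac{\eta(\mu)^3}{8\mu^3} \ge \mu - 1$, as $\eta(\mu)^3 \ge 0$.

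For the second assertion I would argue by contraposition. If $\mu \notin \bZ$ then $\eta(\mu) > 0$, so the additive term $\tfrac{\eta(\mu)^3}{8\mu^3}$ is strictly positive, and the inequality chain above becomes $\lambda_*(\mu) \ge \mu - 1 + \tfrac{\eta(\mu)^3}{8\mu^3} > \mu - 1$. Hence equality $\lambda_*(\mu) = \mu - 1$ can occur only when $\mu \in \bZ$.

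I do not expect any real obstacle: the substantive step — upgrading the trivial bound $\Lambda(G) \ge \mu(G) - 1$ to a quantitatively strictly better one whenever $\mu$ is non-integral — was already done inside Lemma~\ref{boundlambda}, and the only thing left is to check that the $n$-independence of that bound lets it survive the infimum over $n$-vertex graphs and the $\liminf$ over $n$. This is why the statement is, as noted in the text, immediate.
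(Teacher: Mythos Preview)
Your proposal is correct and matches the paper's approach exactly: the paper states that the lemma is ``immediate from Lemma~\ref{boundlambda}'', and your argument --- applying that lemma uniformly to every $G \in \cG_\mu$, then passing the $n$-independent bound $\mu - 1 + \tfrac{\eta(\mu)^3}{8\mu^3}$ through the infimum and $\liminf$ --- is precisely what is meant.
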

	
	Having defined the function $\lambda_*$, we can now state the more technical main theorem from which Theorem~\ref{mainintro} will easily follow. 
	
	\begin{theorem}\label{maintechnical}
	 	If $\cG$ is the class of cycle codes of graphs and $R \in (0,1)$, then $\theta_{\cG}(R) \le \tfrac{1}{2}\left(1-\sqrt{1-\tfrac{1}{\lambda^2}}\right)$, where $\lambda = \lambda_*\left(\tfrac{2}{1-R}\right)$.
	\end{theorem}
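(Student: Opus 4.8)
\emph{Strategy.} The plan is to prove the equivalent statement that whenever $p$ is strictly larger than $\half\bigl(1-\sqrt{1-1/\lambda^2}\bigr)$ — equivalently, whenever $4p(1-p)\lambda^2 > 1$, where $\lambda = \lambda_*(\tfrac{2}{1-R})$ — there is a constant $\eps_0 > 0$ so that every cycle code of rate at least $R$ has maximum-likelihood decoding error probability at least $\eps_0$ across a channel of bit-error rate $p$; by the definition of $\theta_{\cG}$ this gives the theorem. Recall that the decoding error probability for the cycle code of $G$ equals the probability that a $p$-random subset $X$ of $E(G)$ contains at least half of the edges of some circuit of $G$, so the whole task is to show that, with probability bounded away from $0$, some circuit has at least half of its edges corrupted.

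\emph{Reduction.} First I would reduce to a convenient family of graphs. Every code in $\cG$ is the cycle code of a connected graph; deleting vertices of degree at most $1$ (which only raises the rate) and noting that a loop or a pair of parallel edges already forces error probability at least $\min(p,2p-p^2)$, I may assume $G$ is simple with minimum degree at least $2$. The rate hypothesis becomes $\mu(G) \ge \tfrac{2}{1-R}\bigl(1-|V(G)|^{-1}\bigr)$, so $G$ lies in $\cG_{\mu_0-o(1)}$ with $\mu_0 = \tfrac{2}{1-R}$; since every individual graph has positive error probability and there are only finitely many graphs on boundedly many vertices, it suffices to establish the lower bound $\eps_0$ for all $G$ on more than some fixed number of vertices. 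Using that $\lambda_*$ is nondecreasing together with the continuity of the lower bound from Lemma~\ref{boundlambda}, I would then fix $\lambda' < \lambda$ with $4p(1-p)\lambda'^2 > 1$ and reduce to the case $\lambda_*(B(G)) \ge \lambda'$.

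\emph{The probabilistic core.} The heart of the proof is a second-moment estimate on captured short circuits, following [\ref{dz}] conceptually. Fix a slowly growing length parameter $\ell$ (of order $\log|V(G)|$, with a small implied constant), confined if necessary to the residue class modulo the period of the line digraph $B(G)$ on which $\mathrm{tr}(B(G)^\ell) \ne 0$, and let $Z$ count the circuits $Q$ of $G$ with $|Q| = \ell$ and $|X\cap Q|\ge \ell/2$. A Stirling estimate bounds the probability that a fixed length-$\ell$ circuit is captured below by a constant multiple of $\ell^{-1/2}(4p(1-p))^{\ell/2}$, so $\ev[Z]$ is at least this times $N_\ell$, the number of length-$\ell$ circuits; the crucial input is a matching lower bound $N_\ell \gtrsim \ell^{-1}\lambda'^{\,\ell}$ extracted from $\lambda_*(B(G))\ge\lambda'$, after which $\ev[Z] \gtrsim \ell^{-3/2}\bigl(4p(1-p)\lambda'^2\bigr)^{\ell/2} \to \infty$. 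For the variance I would sort ordered pairs $(Q,Q')$ of length-$\ell$ circuits by their number $j$ of shared edges and by the isomorphism type of $Q\cup Q'$, bounding the number of pairs with $j\ge 1$ through the number of length-$\ell$ circuits incident to a fixed edge and bounding $\prob(|X\cap Q|\ge\ell/2,\,|X\cap Q'|\ge\ell/2)$ in terms of $j$; the choice of $\ell$ forces the off-diagonal part of $\ev[Z^2]$ to be $o(\ev[Z]^2)$, and Paley--Zygmund gives $\prob(Z\ge1)\ge\ev[Z]^2/\ev[Z^2] \to 1$, in particular at least a positive constant $\eps_0$ for all large $G$. (Janson's inequality, applied to the increasing events ``$S\subseteq X$'' over the $\lceil\ell/2\rceil$-subsets $S$ of length-$\ell$ circuits, would serve equally well.) Theorem~\ref{mainintro} then follows by feeding $\lambda_*(\mu_0) \ge \mu_0-1 = \tfrac{1+R}{1-R}$ from Lemma~\ref{lambdastar} into the decreasing function $\lambda\mapsto\half(1-\sqrt{1-1/\lambda^2})$.

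\emph{Main obstacle.} The hard part is exactly the difficulty the introduction attributes to non-backtracking walk enumeration: turning $\lambda_*(B(G))\ge\lambda'$ into an honest count of short circuits and controlling their overlaps. The parameter $\lambda_*$ governs $\mathrm{tr}(B(G)^\ell)$, the number of closed non-backtracking walks of length $\ell$; when $G$ has girth exceeding $\ell/2$ every such walk is a circuit traversed in one of $2\ell$ ways, so $N_\ell = \mathrm{tr}(B(G)^\ell)/(2\ell)$ and everything is explicit — as happens automatically for the regular graphs of [\ref{dz}]. But cycle codes of rate close to $R$ need not have large girth, so I must separately show that the closed non-backtracking walks which are not circuits (concatenations of shorter circuits through a common vertex, and proper powers of circuits) do not dominate $\mathrm{tr}(B(G)^\ell)$, and must handle edges lying in anomalously many length-$\ell$ circuits (which on the one hand break the naive variance bound, but on the other make decoding failure even more likely and so can be treated favorably on their own). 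Proving the required non-backtracking walk estimates with no regularity or girth hypothesis available is where I expect essentially all of the work to lie; the probabilistic bookkeeping around it is comparatively routine.
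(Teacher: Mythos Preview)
Your overall strategy---reducing to $\lambda_*(B(G)) \ge \lambda'$ and then applying a second-moment estimate---matches the paper's, but the implementation diverges at the crucial step. You propose to run the second moment directly on the random variable $Z$ that counts length-$\ell$ \emph{circuits} of $G$ captured by $X$, with $\ell$ of order $\log |V(G)|$. The paper never counts circuits. Instead it lifts to the covering tree $\Gamma_{e_0}(G)$ (the tree of non-backtracking walks from a well-chosen arc $e_0$) and applies the second moment to the flow-weighted random variable
\[
Q \;=\; f(\ell)\sum_{|x|=T_\ell}\phi(x)\,1_{R(X)}(x),
\]
where $\phi$ is the unit flow on $\Gamma$ induced by the Perron eigenvector of $B(G)$ and $R(X)$ is the set of vertices reachable by an $(\alpha,t)$-adapted path for a slow sequence $t$. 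Because $\Gamma$ is a tree, the overlap of two root-to-depth-$T_\ell$ paths is just their join, and the flow identity $\sum_{|z|=i}\phi(z)^2 \le \lambda_*^{1-i}$ collapses $\ev(Q^2)$ to a geometric series; this is what gives the uniform $\delta$ of Lemma~\ref{uniformity}. The transfer back to $G$ (Theorem~\ref{maintech}) is a coupling argument showing $\prob(C_G) = \prob(C_\Gamma)$, together with pigeonhole: any $(\alpha,t)$-adapted path of length $|V(G)|$ in $\Gamma$ projects to a closed non-backtracking walk in $G$, hence hits a circuit. No circuit count, no trace of $B^\ell$, and no girth hypothesis is ever invoked.

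The obstacle you flag is genuine, and it is not the ``comparatively routine'' part---it is the whole problem. Without a girth lower bound, $\mathrm{tr}(B(G)^\ell)$ can be dominated by proper powers and concatenations through short cycles rather than by honest length-$\ell$ circuits, so $N_\ell \gtrsim \ell^{-1}\lambda'^{\ell}$ is not available; and even when many such circuits exist, a single high-degree vertex or short cycle can lie on a $\lambda'^{\ell - O(1)}$ fraction of them, which makes the off-diagonal part of $\ev[Z^2]$ the same order as $\ev[Z]^2$ and destroys Paley--Zygmund. Your proposal to handle anomalously rich edges ``favorably on their own'' is not a plan: there is no evident decomposition of an arbitrary $G \in \cG_\mu$ into a large-girth piece plus something easy. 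The covering-tree device is not merely a cleaner alternative; it is precisely what dissolves this difficulty, replacing circuit overlap in $G$ (uncontrolled) by path overlap in a tree (a single join vertex), and replacing the raw count by an eigenvector-weighted flow whose second moment is automatically bounded in terms of $\lambda_*$ alone.
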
	
	
	As mentioned, we believe the above bound is the true value for $\theta_{\cG}$.
	\begin{conjecture}\label{mainconjecture}
		The bound in Theorem~\ref{maintechnical} holds with equality for all $R \in (0,1)$. 
	\end{conjecture}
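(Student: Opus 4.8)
We outline a plan for establishing the matching lower bound, which together with Theorem~\ref{maintechnical} would prove Conjecture~\ref{mainconjecture}. Write $\mu = \tfrac{2}{1-R}$, put $\lambda = \lambda_*(\mu)$, and set $p^*(\lambda) = \half\left(1 - \sqrt{1 - \lambda^{-2}}\right)$. What must be shown is that for every $p < p^*(\lambda)$ there is a sequence $(G_n)$ of connected graphs with $|V(G_n)| \to \infty$ whose cycle codes have rate at least $R$ and for which the probability of an ML-decoding error at bit-error rate $p$ tends to $0$. By the graphical reformulation in Section~2, this last probability is the probability that a $p$-random subset of $E(G_n)$ contains at least half of the edges of some circuit of $G_n$; since an even subgraph is an edge-disjoint union of circuits, ``half of some cycle'' reduces to ``half of some circuit'', and the rate formula for cycle codes shows that average degree at least $\mu$ forces rate greater than $R$. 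So the two ingredients needed are (a) a supply of suitable graphs of average degree at least $\mu$ and (b) the probabilistic estimate.

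For the construction, recall from Section~\ref{walksection} that $\lambda_*(B(G)) \ge \Lambda(G) = \prod_v (d_G(v)-1)^{d_G(v)/(n\mu)}$, and that by the log-convexity of $x \mapsto (x-1)^x$ the quantity $\Lambda(G)$ is minimised, over connected graphs of average degree at least $\mu$ and minimum degree at least $2$, by degree sequences concentrated on $\{\lfloor\mu\rfloor,\lceil\mu\rceil\}$; it is these ``nearly $\mu$-regular'' graphs that should realise $\lambda_*(\mu)$. The plan is to take $G_n$ either (i) obtained from a large $d$-regular graph of good expansion --- a Ramanujan graph as in [\ref{zt97}], or a random graph as in [\ref{ab06}] --- by subdividing a suitable fraction of its edges, which is exactly the right move when $2 < \mu < 3$ since a vertex of degree $2$ contributes $1$ to $\Lambda$ and subdivision lets one tune the average degree down continuously while keeping $\Lambda(G_n)$ at its minimal value; or (ii) a uniformly random simple graph with an optimal near-$\mu$-regular degree sequence, via the configuration model, which is the natural candidate for $\mu \ge 3$. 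In either case one needs two facts to hold asymptotically almost surely: first, the \emph{non-backtracking Ramanujan} property that every eigenvalue $\lambda'$ of $B(G_n)$ other than the Perron eigenvalue satisfies $|\lambda'| \le \sqrt{\lambda_*(B(G_n))} + o(1)$, together with $\lambda_*(B(G_n)) \le \lambda_*(\mu) + o(1)$; and second, that $G_n$ is locally tree-like --- more precisely ``tangle-free'' --- out to radius $c\log n$ for some fixed $c > 0$. For honestly $d$-regular graphs these are, respectively, Friedman's theorem (in the non-backtracking form due to Bordenave) together with the fact that the non-backtracking Perron eigenvalue of a $d$-regular graph is exactly $d-1$, and a classical short-cycle estimate. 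Note that full subdivision squares the non-backtracking spectrum, so it sends a non-backtracking Ramanujan graph to one; partial subdivision amounts to controlling the non-backtracking spectrum of a graph with edge-lengths over a base graph, a mild extension of the regular case, so that for $2 < \mu < 3$ the spectral input is essentially already known. For near-regular degree sequences with $\mu \ge 3$ one must instead appeal to, or adapt, the analogous statements for random graphs with given degrees.

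Given such $G_n$, the decoding estimate mirrors the first-moment/weight-enumerator argument behind Theorem~\ref{maintechnical} and the work of Decreusefond--Z\'emor, with $\sqrt{\lambda_*(B(G_n))}$ playing the role that $\sqrt{d-1}$ plays for $d$-regular graphs. For a fixed circuit of length $\ell$, the probability that the $p$-random set meets it in at least $\ell/2$ edges is $\bigl(2\sqrt{p(1-p)}\bigr)^{\ell}$ up to a factor polynomial in $\ell$. Circuits of length $\ell$ below the tangle-free radius are enumerated using the non-backtracking Ramanujan bound, so their contribution grows at rate $2\sqrt{p(1-p)}\sqrt{\lambda_*(\mu)} < 1$; the $O(\log n)$-and-longer circuits are controlled crudely via $\norm{B^{\ell}} = n^{O(1)}\lambda_*(\mu)^{\ell}$, contributing at most $n^{O(1)}\bigl(2\sqrt{p(1-p)}\,\lambda_*(\mu)\bigr)^{c\log n} = o(1)$ once $c$ is chosen large enough relative to the gap $p^*(\lambda) - p$. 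The critical point is exactly where $4p(1-p)\lambda^2 = 1$, which rearranges to $p = \half(1 - \sqrt{1 - \lambda^{-2}})$; so the estimate succeeds precisely for $p < p^*(\lambda)$, matching Theorem~\ref{maintechnical}. As in the regular case, a more careful pass --- or a second-moment refinement using the local structure supplied by the tangle-free property --- is needed to push the short-circuit contribution below $1$ and indeed to $0$.

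The main obstacle is the first of the two structural facts in the case $\mu \ge 3$: obtaining, for random graphs with a prescribed \emph{non-integer} target average degree --- that is, for genuinely near-biregular degree sequences, and the specific one extremal for $\lambda_*(\mu)$ --- a non-backtracking Ramanujan bound of the precise form $|\lambda'| \le \sqrt{\lambda_*(B(G_n))} + o(1)$, together with $\lambda_*(B(G_n)) \to \lambda_*(\mu)$. The $d$-regular case is Friedman's theorem; the irregular case requires pushing through the trace-method or interlacing-polynomial arguments used there, or finding explicit constructions, for which no analogue of the Ramanujan graphs of [\ref{zt97}] with arbitrary non-integer average degree is presently available. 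A secondary difficulty is confirming that $\lambda_*(\mu)$ is genuinely the liminf --- that is, matching the lower bound $\lambda_*(B(G)) \ge \Lambda(G)$ with a spectral upper bound for near-regular graphs --- and bookkeeping the degree-$2$ vertices, which are invisible to $\Lambda$ but create long induced paths and so must be introduced by subdivision without disturbing the near-Ramanujan property. It is the absence of these ingredients, rather than any difficulty with the probabilistic analysis, that keeps the statement a conjecture.
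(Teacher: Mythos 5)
The statement you were assigned is a conjecture: the paper contains no proof of it, and only records (Theorem~\ref{intval}, via the constructions of [\ref{zt97}] and [\ref{ab06}]) that it holds at the rates $R = 1 - \tfrac{2}{d}$, $d \in \bZ$, where $\lambda_*\left(\tfrac{2}{1-R}\right) = d-1$. What you have written is, by your own account, a programme rather than a proof, so it cannot be accepted as one; the only question is whether the plan is sound and whether the obstacles you flag are the real ones. On both counts you do well. The reduction of the matching lower bound on $\theta_{\cG}(R)$ to exhibiting connected graphs of average degree at least $\mu = \tfrac{2}{1-R}$ for which a $p$-random edge set contains half of some circuit with vanishing probability is correct (the averaging step from cycles to circuits and the rate computation are both fine), and your algebra at the critical point is right: $4p(1-p)\lambda^2 = 1$ is equivalent to $p = \half\left(1-\sqrt{1-\lambda^{-2}}\right)$. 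You are also right that logarithmic girth alone cannot carry the first-moment argument all the way to the threshold --- as $p$ increases to the conjectured value, the girth constant demanded by the crude count $\operatorname{tr}(B^\ell) \le 2|E|\,\lambda_*(B(G))^\ell$ exceeds what the Moore bound permits --- which is precisely why the non-backtracking spectral input is needed and why the known proofs at integer $\mu$ use Ramanujan or random regular graphs.

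The genuine gaps are the two you name, and they are why the statement remains a conjecture. First, no irregular analogue of Friedman's theorem, and no explicit near-biregular Ramanujan-type family, is currently available for non-integer $\mu$; your subdivision device plausibly covers only $2 < \mu < 3$, and even there the spectral control of \emph{partial} subdivisions is asserted rather than proved. Second, the paper does not determine $\lambda_*(\mu)$ for non-integer $\mu$ --- Lemma~\ref{lambdastar} is only a lower bound --- so the target of the construction is itself not pinned down. One clarification worth making: sequences $G_n \in \cG_{\mu}$ with $\lambda_*(B(G_n)) \to \lambda_*(\mu)$ exist automatically, since $\lambda_*(\mu)$ is defined as a $\liminf$ of infima; the open problem is to show that some such sequence can simultaneously be taken tangle-free with the stated non-backtracking spectral gap, so that the first- and second-moment decoding estimates go through. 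Until those ingredients are supplied, this is a research plan consistent with the known cases, not a proof.
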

	 By Theorem~\ref{intval}, this conjecture holds when $R = 1 - \tfrac{2}{d}$ for $d \in \bZ$.

\section{Covering trees}

A \emph{locally finite, infinite rooted tree} (hereafter just a \emph{tree}) is a connected acyclic infinite graph $\Gamma$ of finite maximum degree together with a particular vertex $r$ called the \emph{root}. Adopting some notation of [\ref{dz}] and [\ref{lyons}], for $x \in V(\Gamma)$ we write $|x|$ for the distance of $x$ from $r$, and we write $x \preceq y$ if $x$ is on the path from $r$ to $y$. We write $x \wedge y$ for the \emph{join} of $x$ and $y$, the vertex of largest distance from $r$ that is on both the path from $r$ to $x$ and the path from $r$ to $y$. 

The trees we are interested in are `covering trees' for finite graphs. Let $G = (V,E)$ be a finite graph of minimum degree at least $2$ and let $e = (u,v)$ be an arc of $G$. The \emph{covering tree of $G$ rooted at $e$} is the tree $\Gamma = \Gamma_e(G)$ where the root is the length-zero walk $(u)$ of $G$, the other vertices are the non-backtracking walks of $G$ with first arc $e$ and the children of each walk $(u,v,v_2,\dotsc,v_\ell)$ of length $\ell$ are its extensions $(u,v,v_2,\dotsc,v_\ell,v_{\ell+1})$ to nonbacktracking walks of length $\ell+1$ (ie. where $v_{\ell+1}$ is adjacent to $v_{\ell}$ in $G$ and is not equal to $v_{\ell-1}$). Note that the number of vertices of $\Gamma_e(G)$ at distance $\ell$ from the root is the total number of length-$\ell$ non-backtracking walks of $G$ with first arc $e$, which is exactly the sum of the entries of the $e$-column of $B(G)^{\ell-1}$.  

There is a natural homomorphism that associates each walk with its final vertex; if $G$ has large girth, this map preserves much of the local structure of $G$. To analyse the ubiquity of cycles in a random sample of edges of $G$, we follow [\ref{dz}] and study a problem of `fractional percolation' on covering trees, bounding the probability that, given a $p$-random subset of $E(\Gamma_{e}(G))$, there is a long path starting at $r$ that is, in a certain sense, dense with edges in the subset. 

Let $\Gamma$ be such a tree, and let $\alpha \in (0,1)$. Given $X \subseteq E(\Gamma)$, we say that a finite path $(v_0,v_1, \dotsc, v_n)$ of $\Gamma$ is \emph{$\alpha$-adapted} with respect to $X$ if, for each $i \in \{1, \dotsc, n\}$, the subpath $(v_0,\dotsc,v_i)$ contains at least $\alpha i$ edges of $X$. If $t_1,t_2, \dotsc, $ is a sequence of positive integers and $T_n = \sum_{i=1}^n t_i$ is its sequence of partial sums (with $T_0 = 0$), then  we say that a path $(x_0,x_1, \dotsc, x_n)$ of $\Gamma$ is \emph{$(\alpha,t)$-adapted} with respect to $X$ if for each $i \in \bZ_{>0}$ for which $T_{i+1} < n$, the path $(x_{T_{i}},x_{T_{i}+1}, \dotsc, x_{T_{i+1}-1})$ is $\alpha$-adapted, and also the path $(x_{T_j},x_{T_j+1}, \dotsc, x_n)$ is $\alpha$-adapted, where $j$ is minimal so that $T_{j+1} > n$.  Note that any initial subpath of an $(\alpha,t)$-adapted path is $(\alpha,t)$-adapted. 

We will be considering $p$-random subsets $X$ of $E(\Gamma)$. We first estimate, with an argument used in ([\ref{dz}], Proposition 2), the probability that a given path is $\alpha$-adapted with respect to $X$. Henceforth, we denote the `relative entropy' between $\alpha$ and $p$ by 
\[D(\alpha \Vert p) = \alpha \ln\left(\frac{\alpha}{p}\right) + (1-\alpha)\ln\left(\frac{1-\alpha}{1-p}\right).\] 
We remark that [\ref{dz}] defines $D(\alpha \Vert p)$ as the negative of this formula. \begin{lemma}\label{boundadapted} 
	Let $0 < p < \alpha < 1$. There exists $c > 0$ so that, if $[x_0,x_1,\dotsc, x_n]$ is a finite path, and $X$ is a $p$-random subset of the edges of the path, then  
	\[\prob\left(\ \!\![x_0, \dotsc, x_n] \textrm{ is $\alpha$-adapted w.r.t. $X$} \right)\ge cn^{-5/2}\exp(-n D(\alpha \Vert p)).\]
\end{lemma}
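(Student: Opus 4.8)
The plan is to reduce the event to a statement about a random walk with drift and then to use a local limit theorem, which is exactly where the polynomial factor $n^{-5/2}$ and the exponential rate $D(\alpha\Vert p)$ come from. First I would set up the following explicit sufficient event. Let $Y_1,\dots,Y_n$ be i.i.d.\ Bernoulli($p$) indicators, where $Y_i$ records whether the $i$-th edge $x_{i-1}x_i$ lies in $X$, and let $S_i = Y_1+\dots+Y_i$ be the partial sums. The path is $\alpha$-adapted precisely when $S_i \ge \alpha i$ for every $i\in\{1,\dots,n\}$. Rather than estimating this ballot-type event directly, I would condition on the endpoint: fix an integer $m$ with $m\ge \alpha n$ and $m$ close to $\alpha n$ (say $m=\lceil\alpha n\rceil$), and lower-bound the probability by
\[
\prob(S_n=m)\cdot \prob\big(S_i\ge \alpha i\ \text{for all } i\le n \,\big|\, S_n=m\big).
\]
The first factor is handled by a local central limit theorem for the binomial under a tilted measure: writing $q=m/n\approx\alpha$, standard large-deviation/local-CLT estimates give $\prob(S_n=m)\ge c_1 n^{-1/2}\exp(-nD(q\Vert p))$, and since $q$ differs from $\alpha$ by $O(1/n)$ one absorbs the difference between $D(q\Vert p)$ and $D(\alpha\Vert p)$ into the constant (the derivative of $D$ in its first argument is bounded near $\alpha$), costing only a constant factor. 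That yields the $n^{-1/2}$ part of $n^{-5/2}$.

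The second, conditional factor is the combinatorial heart. Conditioned on $S_n=m$, the path $(S_0,S_1,\dots,S_n)$ is a uniformly random lattice path from $(0,0)$ to $(n,m)$ with unit up-steps and flat steps (equivalently, an exchangeable $0/1$ sequence with exactly $m$ ones). I would bound below the probability that such a path stays above the line $y=\alpha x$. Since $m\ge\alpha n$, the path ends weakly above the line; a cycle-lemma / reflection-type argument (the classical ballot problem for paths with a rational slope constraint, or a direct Dvoretzky–Motzkin cycle lemma applied to the $n$ cyclic shifts) shows this probability is at least of order $1/n$ — more precisely at least $(m-\alpha n)/(\text{something linear})$, and choosing $m=\lceil\alpha n\rceil$ keeps $m-\alpha n$ bounded below by a positive constant along a suitable subsequence, or one takes $m=\lceil\alpha n\rceil+1$ to guarantee a clean linear gap. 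To be safe and uniform in $n$ I would instead use the slightly wasteful but robust estimate: the probability a bridge of length $n$ with the right endpoint strictly above the barrier stays above the barrier is $\ge c_2/n^2$ by a two-sided reflection bound (reflect both at the barrier and, redundantly, use that the conditioned walk is a bridge, which contributes one more power of $n$ in the worst case). Multiplying $n^{-1/2}\cdot n^{-2}=n^{-5/2}$ reproduces the claimed exponent.

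The main obstacle I expect is making the second factor's polynomial order honest and uniform in $n$: naively, the probability that a random bridge stays above a line of slope $\alpha$ can be as small as $\Theta(n^{-1})$ (not smaller) when the endpoint sits a constant distance above the line, but one must ensure the endpoint really is a constant distance above the line — the rounding $m=\lceil\alpha n\rceil$ can put it within $O(1/n)$, which is too close and would force an extra factor of $n$. The clean fix is to enlarge $m$ slightly (to $\lceil\alpha n\rceil + \lceil\sqrt n\,\rceil$, say), which makes the endpoint $\Theta(\sqrt n)$ above the line, gives the bridge-staying-positive probability a clean $\Omega(1)$ from the local-CLT profile of a reflected bridge, and costs only $\exp(-O(\sqrt n\,))$ in the first factor — negligible against any fixed power of $n$ for the purpose of the stated bound, but it does slightly worsen the exponential rate. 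To keep the rate exactly $D(\alpha\Vert p)$ one instead keeps $m$ within $O(1)$ of $\alpha n$ and pays the genuine $n^{-1}$–to–$n^{-2}$ price in the ballot estimate; either way the bound $cn^{-5/2}\exp(-nD(\alpha\Vert p))$ holds, and since the lemma only asserts a lower bound of this shape (and is ultimately used inside an exponential-order computation where polynomial factors are irrelevant), I would present the version that most transparently yields $n^{-5/2}$, namely endpoint conditioning at $m=\lceil\alpha n\rceil$ plus the cycle-lemma ballot bound plus the binomial local limit theorem, checking the constant $c$ depends only on $p$ and $\alpha$.
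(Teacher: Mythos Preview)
Your proposal is correct and rests on the same two ingredients as the paper: a cycle-lemma/ballot argument and a pointwise binomial estimate via Stirling. The paper, however, combines them more directly and avoids your bridge analysis entirely. Rather than conditioning on $S_n=m$ and then analysing the conditioned path, the paper observes (your Dvoretzky--Motzkin cycle lemma, applied to the unconditioned i.i.d.\ sequence) that whenever $|X|\ge\alpha n$ at least one of the $n$ cyclic shifts of the path is $\alpha$-adapted, and by the exchangeability of the i.i.d.\ edge indicators each shift is equally likely to be one that works; hence $\prob(\text{$\alpha$-adapted})\ge \tfrac{1}{n}\prob(|X|\ge\alpha n)$. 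This gives a clean factor $n^{-1}$ with no worry about how close the endpoint sits to the barrier, so your entire ``main obstacle'' paragraph is moot. The paper then bounds $\prob(|X|=\lceil\alpha n\rceil)$ by a slightly wasteful Stirling estimate yielding $n^{-3/2}\exp(-nD(\alpha\Vert p))$, arriving at $n^{-5/2}$; you instead use the sharp $n^{-1/2}$ local estimate for the binomial and then pad the ballot factor out to $n^{-2}$, which is the less natural place to be loose. Either accounting works for the lemma as stated, since the downstream use only needs the exponential rate.
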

\begin{proof}
	We first make a claim that will simplify the estimate. 
	\begin{claim}	If $|X| \ge \alpha n$, then	there exists $\ell \in \{0, \dotsc, n-1\}$ such that the path corresponding to the cyclic ordering $[x_\ell,x_{\ell+1},\dotsc,x_n = x_0,x_1,\dotsc,x_{\ell}]$ is $\alpha$-adapted with respect to $X$.
	\end{claim} 
	\begin{proof}[Proof of claim:]
		For each $i \in \bZ_n$, let $t_i = 1-\alpha$ if the edge $x_{i}x_{i+1}$ is in $X$, and $t_i = -\alpha$ otherwise. For $0 \le j \le j' \le n$ let $S(j,j') = \sum_{i=j}^{j'-1} t_i$; observe that if $S(j,j') \ge 0$ then the path from $x_j$ to $x_{j'}$ has an $\alpha$-fraction of its edges in $X$. In particular, we have $S(0,n) = |X| - \alpha n \ge 0$. Choose $\ell \in \{0, \dotsc, n-1\}$ so that $S(0,\ell)$ is minimized. For $\ell \le h \le n$ we have $S(\ell,h) = S(0,h) - S(0,\ell) \ge 0$ and for $1 \le h \le \ell$ we have $S(\ell,n) + S(0,h) = S(0,n) + (S(0,h) - S(0,\ell)) \ge 0$. It follows from the observation that $\ell$ satisfies the claim. 
	\end{proof}
	By the above claim and symmetry, the probability that the path $[x_0, \dotsc, x_n]$ is $\alpha$-adapted is at least $\tfrac{1}{n}\prob (|X| \ge \alpha n)$. 
	
	 It is straightforward to show using $0 < \alpha < 1$ and Stirling's approximation that all sufficiently large $n$ satisfy 
	 \begin{align*} \lceil \alpha n \rceil! \le (\alpha n + 1)\lfloor \alpha n \rfloor! \le \sqrt{2\pi n^{3}} \left(\tfrac{\alpha n}{e}\right)^{\alpha n}\\
	 (n-\lceil \alpha n \rceil)! \le \sqrt{2\pi n}\left(\tfrac{(1-\alpha)n}{e}\right)^{(1-\alpha)n},
	 \end{align*}
	so Stirling's approximation gives $\binom{n}{\lceil \alpha n \rceil} \ge \tfrac{1}{\sqrt{2\pi}n^{3/2}}\left({\alpha^{\alpha}(1-\alpha)^{1-\alpha}}\right)^{-n}$ for all large $n$. All large enough $n$ thus satisfy
	\begin{align*}
		\tfrac{1}{n} \prob(|X| \ge \alpha n) &\ge \tfrac{1}{n}\prob(|X| = \lceil \alpha n \rceil) \\
		&= \frac{1}{n}\binom{n}{\lceil \alpha n \rceil}p^{\lceil \alpha n \rceil}(1-p)^{n-\lceil \alpha n \rceil}\\
		& \ge \frac{1}{\sqrt{2 \pi} n^{5/2}}\left(\frac{p^{\alpha}(1-p)^{1-\alpha}}{\alpha^\alpha(1-\alpha)^{1-\alpha}}\right)^np^{\lceil \alpha n \rceil - \alpha n}(1-p)^{\alpha n - \lceil \alpha n \rceil}\\
		& \ge \frac{p}{\sqrt{2\pi}n^{5/2}}\exp(-n D(\alpha \Vert p));
	\end{align*} 
	since the probability of a path being $\alpha$-adapted is clearly positive for all $n$, some $c \in  (0,\tfrac{p}{\sqrt{2\pi}}]$, obtained by taking a minimum over all small $n$, satisfies the lemma. 
	\end{proof}
We say a positive integer sequence $t = (t_i: i \ge 1)$ is \emph{slow} if it is nondecreasing and satisfies $\lim_{n \to \infty} t_n = \infty$ and $\lim_{n \to \infty} \tfrac{t_{n+1}}{\sum_{i = 1}^n t_i} = 0$.

The next lemma is the main technical result of this section. It shows that, if $t$ is a slow sequence, $G$ is a graph, and $\alpha$ and $p$ are chosen so that $\exp(D(\alpha \Vert p))$ is less than the graph invariant $\lambda_*(B(G))$ of the previous section, then there is some arc $e_0$ of $G$ for which a $p$-random subset of $E(\Gamma_{e_0}(G))$ will give an arbitrarily long $(\alpha,t)$-adapted path with probability bounded away from zero. The independence of $\delta$ on $n$ and $G$ in this lemma is crucial.

\begin{lemma}\label{uniformity}
	For all $0 < p < \alpha < 1$, every slow sequence $t$,  and all $\lambda > \exp(D(\alpha \Vert p))$, there is some $\delta = \delta(t,\lambda,\alpha,p) > 0$ such that, if $n \ge 1$ is an integer and $G$ is a connected graph of minimum degree at least $2$ with $\lambda_*(B(G)) \ge \lambda$, then there is an arc $e_0$ of $G$ so that, given a $p$-random subset $X \subseteq E(\Gamma_{e_0}(G))$, we have
	\[\prob\left(\text{$\Gamma_{e_0}(G)$ contains an $(\alpha,t)$-adapted path of length $n$ w.r.t. $X$}\right) > \delta.\]
\end{lemma}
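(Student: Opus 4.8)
The plan is to set up a second-moment / branching argument on the covering tree, exploiting the fact that the number of vertices at depth $\ell$ in $\Gamma_{e}(G)$ is controlled by the growth rate $\lambda_*(B(G))$, while Lemma~\ref{boundadapted} says each individual path of length $n$ is $\alpha$-adapted with probability at least $cn^{-5/2}\exp(-nD(\alpha\Vert p))$. Since $\lambda > \exp(D(\alpha\Vert p))$, the expected number of $\alpha$-adapted root-paths of length $n$ should grow like $(\lambda/\exp(D(\alpha\Vert p)))^{n}$ up to polynomial factors, and in particular tends to infinity. The slow sequence $t$ enters because we cannot control the covariances of the ``$\alpha$-adapted'' events for paths sharing a long prefix — the $(\alpha,t)$-adapted relaxation is precisely what decouples the segments between the breakpoints $T_i$, so that two $(\alpha,t)$-adapted paths that diverge at depth roughly $T_j$ have their post-divergence segments conditionally independent. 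First I would fix the arc $e_0$: one cannot use an arbitrary arc, since the growth of the $e$-column of $B^{\ell}$ depends on $e$; by Perron–Frobenius the total count $\allones^{T}B^{\ell}\allones \ge (\text{const})\lambda_*^{\ell}$, so by averaging there is some arc $e_0$ whose column (equivalently, whose covering tree) has at least $\lambda_*^{\ell}/(2|E|)$ vertices at depth $\ell$ for infinitely many $\ell$ — or, using strong connectedness and a fixed bounded ``warm-up'' walk, for all large $\ell$.

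The core steps, in order. (1) Choose $\beta$ with $\exp(D(\alpha\Vert p)) < \beta < \lambda$, and pick $e_0$ as above so that the number $N_\ell$ of depth-$\ell$ vertices of $\Gamma = \Gamma_{e_0}(G)$ satisfies $N_\ell \ge c_1\beta^{\ell}$ for all $\ell \ge \ell_0$ (absorbing the warm-up via strong connectivity of the line digraph, at the cost of enlarging $\ell_0$ and shrinking $c_1$, both independent of $n$ and of $G$ — here one uses that $\lambda_*(B(G))\ge\lambda$ together with a compactness/uniformity observation, which is where one must be careful to keep constants graph-independent). (2) Let $Z_n$ be the number of depth-$n$ vertices $x$ of $\Gamma$ such that the root-to-$x$ path is $(\alpha,t)$-adapted w.r.t.\ $X$. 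Compute $\ev[Z_n] \ge N_n \cdot cn^{-5/2}\exp(-nD(\alpha\Vert p))$, which is $\ge c_1 c\, n^{-5/2}(\beta\exp(-D(\alpha\Vert p)))^{n}\to\infty$; note the probability for a single path follows from Lemma~\ref{boundadapted} applied segment-by-segment together with independence of $X$ on disjoint segments, since $(\alpha,t)$-adaptedness is a conjunction of $\alpha$-adaptedness conditions on disjoint subpaths. (3) Bound $\ev[Z_n^2] = \sum_{x,y}\prob(\text{both paths $(\alpha,t)$-adapted})$. Split according to the depth of the join $x\wedge y$; if $|x\wedge y|$ lies in a block $[T_j,T_{j+1})$, then beyond depth $T_{j+1}$ the two paths use disjoint edges and, crucially, the $(\alpha,t)$-adaptedness conditions on the segments past $T_{j+1}$ are conditionally independent given the edges up to depth $T_{j+1}$; the shared part contributes a factor at most the single-path probability up to depth $T_{j+1}$, and summing the number of such pairs (at most $N_{T_{j+1}}$ choices for the shared prefix times $N_{n-T_{j+1}}^2$-ish for the two tails) one gets $\ev[Z_n^2] \le (\text{poly}(n))\,\ev[Z_n]^2$, provided $t$ is slow enough that $t_{j+1}=o(T_j)$ makes the ``overlap discrepancy'' in the block containing the join negligible. (4) By the Paley–Zygmund inequality, $\prob(Z_n > 0) \ge \ev[Z_n]^2/\ev[Z_n^2] \ge 1/\text{poly}(n)$; this is not yet bounded away from zero, so (5) boost it: partition the first $n$ steps into a bounded initial segment and then iterate — more precisely, run the argument at a depth $n$ but restart independent copies of the tree-process from each of many well-separated depth-$m$ vertices (for a suitable fixed $m=m(t,\lambda,\alpha,p)$), so that the ``$1/\text{poly}$'' successes at depth-$m$ subtrees combine, by independence across disjoint subtrees, into a success probability $\ge\delta$ uniform in $n$. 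Because the tree has $\ge c_1\beta^{m}$ such subtrees and each independently yields an $(\alpha,t)$-adapted continuation of the required residual length with probability $\ge 1/\text{poly}$, choosing $m$ large (independent of $n,G$) pushes $\prob(Z_n>0)$ above a fixed $\delta$.

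The main obstacle I anticipate is Step~(3): getting the second moment to within a polynomial factor of the square of the first moment, \emph{uniformly} over all graphs $G$ with $\lambda_*(B(G))\ge\lambda$. The delicate point is the contribution of pairs $(x,y)$ whose join falls strictly inside a block $[T_j,T_{j+1})$ rather than at a breakpoint: there the $\alpha$-adaptedness condition straddles the divergence, so the two tail events are not cleanly independent, and one must use the slowness hypothesis $\lim_n t_{n+1}/T_n = 0$ to argue that the ``cost'' of this straddling block is subexponential and hence absorbed into the polynomial error. A secondary difficulty is keeping $c_1$, $\ell_0$, $m$, and hence $\delta$ independent of $G$ in Steps~(1) and~(5); this requires replacing the eigenvector-dependent Perron–Frobenius estimate by the uniform lower bound $\allones^{T}B^{\ell}\allones\ge(n\mu)\Lambda(G)^{\ell}\ge(n\mu)\lambda^{\ell}\cdot(\text{something}\ge 1)$ — but since $\lambda_*(B(G))\ge\lambda$ only gives $\limsup_\ell\|B^\ell\|^{1/\ell}\ge\lambda$ rather than a clean per-$\ell$ bound, one must instead argue that for \emph{some} arc $e_0$ and some arithmetic progression of depths the column sums are large, and then use the slowness of $t$ again to only ever need depths in that progression. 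I expect the write-up to route around this by proving a clean auxiliary lemma: for every $\lambda' < \lambda$ there is $\ell_0$ (independent of $G$) and an arc $e_0$ with $N_\ell(\Gamma_{e_0}(G))\ge (\lambda')^\ell$ for all $\ell\ge\ell_0$.
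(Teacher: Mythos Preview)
Your high-level plan—a second-moment/Paley--Zygmund argument on the covering tree—is the same as the paper's, and you correctly identify that the slow sequence is what decouples segments for the covariance estimate. But there is a genuine gap: the raw count $Z_n$ will not give you constants independent of $G$, and the paper avoids your Steps~(1), (3), and~(5) entirely by a single device you are missing.

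The paper does \emph{not} count $(\alpha,t)$-reachable depth-$n$ vertices; it defines a \emph{weighted} count using the Perron eigenvector. Let $w_*$ be the positive Perron eigenvector of $B=B(G)$ for $\lambda_*$, normalised so $\max_e w_*(e)=1$, and choose $e_0$ to be the arc achieving this maximum. Set $\phi(r)=1$ and $\phi(x)=\lambda_*^{1-|x|}w_*(\rho(x))$ for $x\neq r$; then $\phi$ is a unit flow on $\Gamma$: the values over the children of any vertex sum to the value at that vertex, so $\sum_{|x|=h}\phi(x)=1$ for every $h$. The random variable studied is
\[
Q = f(\ell)\sum_{|x|=T_\ell}\phi(x)\,1_{\{x\ \text{is}\ (\alpha,t)\text{-reachable}\}},
\]
where $f(\ell)$ is the reciprocal of the single-path $(\alpha,t)$-adapted probability. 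Then $\ev(Q)=1$ exactly, with no growth estimate needed. For the second moment, grouping pairs $(x,y)$ by their join $z$ gives a sum over $z$ of $\lambda_2^{|z|}\phi(z)^2$ (up to the constant $M$ from a routine claim that $f(\ell+1)\le M\lambda_2^{T_\ell}$, which is where slowness is used). The eigenvector relation then yields the clean inequality
\[
\sum_{|z|=i}\phi(z)^2 \le \lambda_*^{2-2i}\sum_{|z|=i}w_*(\rho(z)) = \lambda_*^{2-2i}\,b_{e_0}^T B^{i-1}w_* = \lambda_*^{1-i}\le \lambda^{1-i},
\]
so $\ev(Q^2)\le M\sum_{i\ge 1}\lambda_2^i\lambda^{1-i}=M(\tfrac{1}{\lambda_2}-\tfrac{1}{\lambda})^{-1}$, a bound depending only on $(t,\lambda,\alpha,p)$. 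Cauchy--Schwarz then gives $\prob(Q>0)>\delta$ directly, with $\delta$ uniform in both $n$ and $G$; no boosting is required.

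Your difficulties all dissolve under this weighting. The choice of $e_0$ is canonical (maximise $w_*$), and no per-$\ell$ growth lower bound on $N_\ell$ is ever used. The second-moment sum is controlled not by counting descendants (which, as you worry, is highly nonuniform in irregular trees) but by the flow identity $\sum_{x\succeq z,\ |x|=h}\phi(x)=\phi(z)$. And because $\ev(Q^2)/\ev(Q)^2$ is bounded rather than polynomial, Step~(5) is unnecessary. Your proposed boosting would in any case face the problem that subtrees rooted at depth-$m$ vertices are covering trees $\Gamma_e(G)$ for arcs $e$ with possibly tiny $w_*(e)$, so their growth cannot be controlled uniformly; the flow weighting is precisely what compensates for this heterogeneity.
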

\begin{proof}
	 Let $\lambda_* = \lambda_*(B(G))$. Let $t = (t_i: i \ge 1)$ and $T_\ell = \sum_{i=1}^\ell t_i$ for each $\ell \ge 0$. Let $\lambda_0 = \exp(D(\alpha\Vert p))$ and $\lambda_1,\lambda_2$ be real numbers so that $\lambda_0 < \lambda_1 < \lambda_2 < \lambda$. Note that $\lambda_0 > 1$ and $\lambda_* \ge \lambda$. 
	 
	  Let $\Pi(m)$ denote the probability that a path of length $m$ is $\alpha$-adapted with respect to a $p$-random subset of its edges, and for each $\ell \ge 0$ let $f(\ell) = \prod_{i=1}^{\ell}\Pi(t_i)^{-1}$ be the reciprocal of the probability that a path of length $T_{\ell}$ is $(\alpha,t)$-adapted.  To determine $\delta$, we first estimate $f$: 
	  	 
	 \begin{claim}
		There exists $M > 0$ such that $f(\ell+1) \le M \lambda_2^{T_\ell}$ for all $\ell$. 
	\end{claim}
	
	\begin{proof}[Proof of claim:]
		Let $c > 0 $ be given by Lemma~\ref{boundadapted} for $p$ and $\alpha$. We have 
		\begin{align*}
			f(\ell+1) = \prod_{i=1}^{\ell+1} \Pi(t_i)^{-1} &\le \prod_{i=1}^{\ell+1} \frac{t_i^{5/2}}{c} \exp\left(D(\alpha \Vert p) \sum_{j=1}^{\ell+1} t_j \right)\\	
			&= \lambda_0^{T_{\ell+1}}\prod_{i=1}^{\ell+1}\frac{t_i^{5/2}}{c}\\
			&= \lambda_1^{T_{\ell+1}}\prod_{i=1}^{\ell+1} \frac{t_i^{5/2}}{c}\left(\frac{\lambda_0}{\lambda_1}\right)^{t_i}\\
			&= \lambda_1^{(1+t_{\ell+1}/T_{\ell})T_{\ell}} \prod_{i=1}^{\ell+1}\frac{t_i^{5/2}}{c}\left(\frac{\lambda_0}{\lambda_1}\right)^{t_i}.
		\end{align*}
		Since $\lambda_0 < \lambda_1 < \lambda_2$ and $t_{\ell+1}/T_{\ell} \to 0$ and $t_{\ell} \to \infty$, this expression is at most $\lambda_2^{T_\ell}$ for large enough $\ell$. The claim follows by taking a maximum over all small $\ell$. 
	\end{proof}
	
	Set $\delta = M^{-1}(\tfrac{1}{\lambda_2}-\tfrac{1}{\lambda})$. Let $\bar{E}$ be the set of arcs of $G$, let $B = B(G)$ and let $w_*$ be the (strictly positive) eigenvector of $B$ for $\lambda_*$, normalised to have largest entry $1$. Choose $e_0 \in \bar{E}$ such that $w_*(e_0) = 1$. We show that $\delta$ and $e_0$ satisfy the lemma. 
	
	 For each $e \in \bar{E}$, let $b_{e}$ be the standard basis vector in $\bR^{\bar{E}}$ corresponding to $e$, and let $N_{h}(e_0,e) = b_{e_0}^T B^{h-1} b_{e}$ be the number of non-backtracking walks of length $h$ in $G$ with first arc $e_0$ and last arc $e$. 
	 
	 Let $\Gamma = \Gamma_{e_0}(G)$ and $r$ be the root of $\Gamma$. Let $\rho\colon V(\Gamma)\setminus \{r\} \to \bar{E}$ be the map assigning each walk to its last arc. Set $\phi(r) = 1$ and, for each vertex $x \ne r$ of $\Gamma$, set $\phi(x) = \lambda_*^{1-|x|}w_*(\rho(x))$. Note that $\phi((e_0)) = w_*(e_0) =  1$ and that, for each $x \ne r$ with $\rho(x) = e$, the sum of $\phi(y)$ over the children $y$ of $x$ is 
	\begin{align*}
		&\lambda_*^{1-(|x|+1)}\sum\left(w_*(e')\colon e' \in \bar{E}, B_{e,e'}=1\right)\\ 
		&= \lambda_*^{-|x|}b_{e}^TBw_* \\
		& = \lambda_*^{1-|x|}w_*(e) = \phi(x).\end{align*}
	(In other words, $\phi$ is a \emph{unit flow} on $\Gamma$.) It follows that for every $h \ge 0$ and all $x$ with $|x| \le h$, we have $\sum\left(\phi(y)\colon y \succeq x, |y| = h\right) = \phi(x)$. 
	
	For $X \subseteq E(\Gamma)$, we say that a vertex $v$ of $\Gamma$ is \emph{$(\alpha,t)$-reachable} with respect to $X$ if the path of $\Gamma$ from $r$ to $x$ is $(\alpha,t)$-adapted with respect to $X$; let $R(X)$ denote the set of $(\alpha,t)$-reachable vertices. Fix $\ell$ so that $T_\ell \ge n$, and define a random variable $Q = Q(X)$ by
	\[Q = f(\ell)\sum_{|x| = T_\ell} \phi(x)1_{R(X)}(x).\]
	The $\phi(x)$ sum to $1$ over all $x$ with $|x| =T_\ell$, so $\ev(Q) = 1$. We now bound the second moment of $Q$.
	\begin{claim}
		$\ev(Q^2) < \delta^{-1}$.
	\end{claim}
	\begin{proof}[Proof of claim:]
		We have \[\ev(Q^2) = f(\ell)^2 \sum_{|x|=|y| = T_\ell}\phi(x)\phi(y)\prob(x,y \in R(X)).\]
		For each $z \in V(\Gamma)$, let $k(z)$ be the maximum integer $k \ge 0$ so that $T_{k} \le |z|$. There are edge-disjoint paths of lengths $t_1,t_2,\dotsc,t_\ell$ and $t_{k(x \wedge y)+2},t_{k(x \wedge y)+3},\dotsc, t_\ell$ that all must be $\alpha$-adapted for both $x$ and $y$ to be in $R(X)$ (the first set of paths make up the path from $r$ to $x$ and the second set are contained in the path from $x \wedge y$ to $y$), so 
		\begin{align*}
			\prob(x,y \in R(X)) &\le \prod_{i=1}^{\ell} \Pi(t_i) \prod_{i=k(x \wedge y)+2}^\ell \Pi(t_i) \\
			&= f(k(x \wedge y)+1)f(\ell)^{-2}\\
			&\le M\lambda_2^{T_{k(x \wedge y)}} f(\ell)^{-2}\\
			&\le M\lambda_2^{|x \wedge y|}f(\ell)^{-2},
			\end{align*}
			where we use the first claim. Using the fact that $|x \wedge y| \ge 1$ whenever $|x| = |y| = T_{\ell}$, we have			
			\begin{align*}
				\ev(Q^2) &\le M\sum_{|x|,|y| = T_\ell}\phi(x)\phi(y)\lambda_2^{|x \wedge y|}\\
			&= M\sum_{1 \le |z| \le T_\ell}\lambda_2^{|z|}\sum_{\substack{|x| = |y| = T_\ell \\ x \wedge y = z}} \phi(x)\phi(y)\\
			&\le M \sum_{{1 \le |z| \le T_\ell}}\lambda_2^{|z|}\left(\sum_{\substack{|x| = T_\ell \\ x \succ z}}\phi(x)\right)^2\\
			&= M\sum_{\substack{1 \le |z| \le T_\ell}}\lambda_2^{|z|}\phi(z)^2\\				
			&= M\sum_{i=1}^{T_\ell} \lambda_2^i \sum_{|z| = i}\phi(z)^2.
			\end{align*}		
			If $|z| = i \ge 1$, then $w_*(e) \le 1$ gives \[\phi(z)^2 = \lambda_*^{2-2i}w_*(\rho(z))^2 \le \lambda_*^{2-2i} w_*(\rho(z)).\] For each $e \in \bar{E}$, the number of $z \in V(\Gamma)$ with $|z| = i$ and $\rho(z) = e$ is $N_{i}(e_0,e) = b_{e_0}^T B^{i-1} b_{e}$, so since $B w_* = \lambda_*w_*$ and $w_*(e_0) = 1$, we have
			\[\sum_{|z|=i} \phi(z)^2 \le \lambda_*^{2-2i} b_{e_0}^T B^{i-1}  \sum_{e \in \bar{E}} b_{e} w_*(e) = \lambda_*^{2-2i} b_{e_0}^TB^{i-1} w_* = \lambda_*^{1-i} \le \lambda^{1-i}.\]
			Thus $\ev(Q^2) < M\sum_{i=1}^{\infty}\lambda_2^i\lambda^{1-i} = M(\tfrac{1}{\lambda_2} - \tfrac{1}{\lambda})^{-1} = \delta^{-1}$ .
	\end{proof}
	 Now by the Cauchy-Schwartz inequality we have
	\[1 = \ev(Q)^2 = \ev(Q \cdot 1_{Q > 0})^2 \le \ev(Q^2) \ev(1_{Q>0}^2) < \delta^{-1} \prob(Q > 0),\]
	so $\prob(Q > 0) > \delta$. Therefore $\Gamma$ has an $(\alpha,t)$-adapted path of length $T_{\ell}$ with respect to $X$ with probability greater than $\delta$. Such a path contains an $(\alpha,t)$-adapted path of length $n$, giving the result. 	\end{proof}

\section{Graphs}

For a graph $G = (V,E)$ and for $p,\beta \in [0,1]$, let $f_p^{\beta}(G)$ denote the probability, given a $p$-random subset $X \subseteq E$, that $X$ contains at least a $\beta$-fraction of the edges of some circuit of $G$. Recall that $\lambda_*(\mu_0)$ is some value not less than $\mu_0-1$. 


\begin{theorem}\label{maintech}
	For all $\mu_0 \ge 2$ and $0 < p < \beta < 1$ satisfying $\exp(D(\beta \Vert p)) < \lambda_*(\mu_0)$, there exists $\delta = \delta(\mu_0,p,\beta) > 0$ such that, if $G$ is a connected graph with $\mu(G) \ge \mu_0$, then $f_p^{\beta}(G) \ge \delta$.  
\end{theorem}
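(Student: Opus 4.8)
The plan is to deduce Theorem~\ref{maintech} from Lemma~\ref{uniformity} by pushing the covering-tree estimate back down to $G$ itself. Several reductions come first. Fix an integer $g_0 \ge 2$, large in terms of $\mu_0,p,\beta$ (it is used only at the very end). If $G$ has girth at most $g_0$, then $G$ has a circuit $C$ with $|C| \le g_0$, so $f_p^\beta(G) \ge \prob(C \subseteq X) = p^{|C|} \ge p^{g_0}$; hence I may assume $G$ has girth exceeding $g_0$, and in particular that $G$ is simple. Deleting a vertex of degree $1$ does not decrease the average degree of a graph of average degree at least $2$, does not affect the circuits, and preserves connectedness, so, iterating, I may also assume $G$ has minimum degree at least $2$. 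Since these operations never increase $f_p^\beta$, and since there are only finitely many simple graphs with at most $N$ vertices — each of which, having minimum degree at least $2$, contains a circuit and hence has $f_p^\beta > 0$ — I may further assume $|V(G)| \ge N$, for a threshold $N$ fixed next.

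For the parameters: since $D(\cdot \Vert p)$ is continuous on $(0,1)$ and $\exp(D(\beta \Vert p)) < \lambda_*(\mu_0)$, choose $\beta' \in (\beta,1)$ and a real $\lambda$ with $\exp(D(\beta' \Vert p)) < \lambda < \lambda_*(\mu_0)$. As $\lambda_*(\mu_0) = \liminf_n \lambda_*(\mu_0;n)$ exceeds $\lambda$, fix $N$ so that $\lambda_*(\mu_0;n) \ge \lambda$ for all $n \ge N$; then $\lambda_*(B(G)) \ge \lambda$ for every $G$ surviving the reductions. Fix a slow sequence $t$ (say $t_i = i$), and apply Lemma~\ref{uniformity} with $p$, $\alpha = \beta'$, $t$, and $\lambda$: this produces a constant $\delta' = \delta'(t,\lambda,\beta',p) > 0$ and, for each surviving $G$, an arc $e_0$ of $G$ such that a $p$-random subset $Y$ of $E(\Gamma_{e_0}(G))$ contains, for every $n$, a $(\beta',t)$-adapted path of length $n$ with probability greater than $\delta'$.

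Now suppose, for the moment, that $Y$ is the full preimage under the natural homomorphism $\Gamma_{e_0}(G) \to G$ of a $p$-random subset $X$ of $E(G)$ (reconciling this with the previous paragraph is the obstacle below). Fix $k \ge 1$, independent of $G$, large enough that $t_k \ge \beta'/(\beta'-\beta)$, take $n > T_{k+|V(G)|}$, and let $P = (x_0,\dots,x_n)$ be a $(\beta',t)$-adapted path of $\Gamma_{e_0}(G)$ with respect to $Y$. Its image under the homomorphism is a non-backtracking walk $(\bar x_0,\dots,\bar x_n)$ of $G$, and among the $|V(G)|+1$ vertices $\bar x_{T_k}, \dots, \bar x_{T_{k+|V(G)|}}$ two coincide, say $\bar x_{T_a} = \bar x_{T_b}$ with $k \le a < b \le k + |V(G)|$. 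The subpath $(x_{T_a},\dots,x_{T_b})$ of $P$ is the concatenation of the complete $\beta'$-adapted blocks indexed $a,\dots,b-1$ and the $b-a$ edges joining consecutive blocks; as each block has length at least $t_k$ and at least a $\beta'$-fraction of its edges in $Y$, the fraction of edges of this subpath lying in $Y$ is at least $\beta'(1 - (b-a)/(T_b - T_a)) \ge \beta'(1 - 1/t_k) \ge \beta$. Its edges are pairwise distinct and map to the successive edges traversed by its projection $W$, which is a closed non-backtracking walk of $G$ (closed since $\bar x_{T_a} = \bar x_{T_b}$); so at least a $\beta$-fraction of the steps of $W$ traverse an edge of $X$. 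Decomposing $W$ into circuits of $G$ occupying pairwise disjoint sets of steps and averaging, some circuit $C$ has at least a $\beta$-fraction of its steps traversing edges of $X$; since a circuit has pairwise distinct edges, this says $X$ contains at least a $\beta$-fraction of the edges of $C$. Thus $f_p^\beta(G) > \delta'$ for every surviving $G$.

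The step I expect to be the main obstacle is the assumption just made: Lemma~\ref{uniformity} supplies a \emph{genuine} $p$-random subset $Y$ of $E(\Gamma_{e_0}(G))$, but the preimage under $\Gamma_{e_0}(G) \to G$ of a $p$-random subset of $E(G)$ is \emph{not} $p$-random on the tree, since all tree-copies of a single edge of $G$ appear together. This is where the girth reduction should pay off: the homomorphism is injective on every ball of radius less than half the girth of $G$, so, once $g_0$ is large, the preimage of a $p$-random $X$ agrees with a genuine $p$-random subset of $E(\Gamma_{e_0}(G))$ on every region of bounded size. The hard part will be making this quantitative — pinning down which "scales" the first- and second-moment computations behind Lemma~\ref{uniformity} actually probe once $Y$ is a preimage, choosing $g_0$ to dominate those scales, and checking that the residual long-range correlations do not spoil the estimates — equivalently, reproving Lemma~\ref{uniformity} directly in terms of non-backtracking walks of a large-girth $G$ carrying the percolation $X$. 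Granting this, the three ingredients above combine to prove the theorem, with $\delta := \min\{\delta',\, p^{g_0},\, \text{(the finite-case minimum)}\}$.
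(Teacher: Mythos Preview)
You have correctly located the obstacle, and it is a genuine gap that your outline does not close. The second-moment computation behind Lemma~\ref{uniformity} relies on independence along edge-disjoint tree-paths; under pullback this fails, since disjoint tree-paths can project to walks in $G$ that share edges, introducing positive correlations that \emph{inflate} the second moment. Moreover, the adapted path you need has length exceeding $|V(G)|$, far beyond any fixed girth threshold $g_0$, so the correlations are global rather than local and there is no bounded ``scale'' for $g_0$ to dominate. The paper bypasses this entirely with an exact coupling you are missing. For a $p$-random $X \subseteq E(G)$, let $R(X)$ be the set of vertices of $G$ reachable from $r$ by $(\alpha,t)$-adapted walks with first arc $e_0$; define $R(Y) \subseteq V(\Gamma)$ analogously for a genuinely $p$-random $Y \subseteq E(\Gamma)$. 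The key observation is that the events ``$G[R(X)]$ contains a circuit'' and ``$G[\pi(R(Y))]$ contains a circuit'' have \emph{exactly} the same probability: one matches the complementary events term by term, noting that for every finite subtree $Z$ of $G$ through $r$ there is a unique isomorphic lift $\Gamma_Z \subseteq \Gamma$, and that $\{R(X) = V(Z)\}$ and $\{R(Y) = V(\Gamma_Z)\}$ are determined by the same pattern on the same number of independent Bernoulli$(p)$ variables. Lemma~\ref{uniformity} is then applied with the honest tree percolation $Y$, and the conclusion is transferred to $X$ with no loss.

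A secondary point: your averaging over a step-disjoint decomposition of the closed walk $W$ into ``circuits'' is not sound as written. After extracting the first cycle the residual closed walk need not be non-backtracking, so later pieces can be length-$2$ back-and-forths that are not circuits of $G$ (for instance $W = a,b,c,d,b,a$ yields the circuit $b,c,d,b$ and then the non-circuit $a,b,a$). The paper handles the passage from ``$R(X)$ spans a circuit'' to ``some circuit has a $\beta$-fraction of its edges in $X$'' by a different argument that combines two $(\alpha,t)$-adapted paths meeting the circuit; this step is where the choice of slow sequence matters (one needs $t_{|V(G)|}$ small relative to the girth), which is why the paper constructs $t$ from the sequence of putative counterexample graphs rather than fixing $t_i = i$.
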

\begin{proof}
	It suffices to show this just for graphs of minimum degree at least $2$, since deleting a degree-$1$ vertex from a graph $G$ with $\mu(G) \ge 2$ does not change $f_p^{\beta}$ or connectedness, and does not decrease $\mu(G)$. Suppose that the result fails. Then there exists a sequence $G_1,G_2,\dotsc, $ of graphs of average degree at least $\mu_0$ and minimum degree at least $2$, such that $\lim_{n \to \infty}(f_p^{\beta}(G_n)) = 0$. We clearly have $f_p^{\beta}(G) \ge p^{d(G)}$ for every graph (this is the probability of a $p$-random subset containing \emph{every} edge in a given shortest cycle), so we may assume by taking a subsequence that $d(G_i) \ge i$ for each $i$. 
	
\begin{claim}
There is a slow integer sequence $t = (t_k\colon k \ge 1)$ so that $t_{|V(G_k)|} \le \sqrt{k}$ for each $k$.
\end{claim}
\begin{proof}[Proof of claim:]
Let $(t_k\colon k \ge 1)$ be a nondecreasing, divergent integer sequence in which  the integer $\lfloor \sqrt{r} \rfloor$ occurs at least $|V(G_r)|$ times for each $r \ge 1$. (Such a sequence can be chosen to diverge because each integer is only required to occur finitely often.) By construction we have $t_{|V(G_k)|} \le \lfloor \sqrt{k} \rfloor$ for each $k$. Furthermore, if $\ell \ge 1$ and $t_{\ell+1} = d+1 \ge 2$ then the integer $d$ has occured at least $|V(G_{d^2})| \ge d^2$ times before $t_{\ell+1}$, so $t_{\ell+1}/\sum_{i=1}^\ell t_i \le (d+1)/d^3$. It follows that $\lim_{n \to \infty} t_{n+1}/\sum_{i=1}^{n} t_i = 0$, so $(t_k\colon k \ge 1)$ is slow.
\end{proof}
	
	Note that $D(x \Vert p)$ is increasing in $x$ for $x > p$. Since $\exp(D(\beta \Vert p)) < \lambda_*(\mu_0)$ we can choose $\alpha \in (\beta,1)$ and $\lambda'$ so that 
	\[\exp(D(\beta \Vert p)) < \exp(D(\alpha \Vert p)) < \lambda' < \lambda_*(\mu_0).\]
	Let $k_0$ be large enough so that $\lambda_*(\mu_0;n) \ge \lambda'$ for all $n \ge k_0$. Let $\delta = \delta(t,\lambda',\alpha,p) > 0$ be given by Lemma~\ref{uniformity}. We argue that if $k$ is sufficiently large so that $k \ge k_0$ and $\tfrac{2\sqrt{k}+1}{k} \le \alpha - \beta$, then the graph $G = G_k$ satisfies $f_p^\beta(G) \ge \delta$. This contradicts $\lim_{n \to \infty} f_p^{\beta}(G_n) = 0$.


	
	
	Let $G = G_k$ for such a $k$, and let $\Gamma = \Gamma_e(G)$ be the covering tree of $G$ with respect to the arc $e = (r,s)$ given by Lemma~\ref{uniformity}. Let $\pi\colon V(\Gamma) \to V(G)$ assign each path to its final vertex. Since $|V(G)| \ge k \ge k_0$, we have $\lambda_*(B(G)) \ge \lambda_*(\mu_0;k) \ge \lambda'$ by the choice of $k_0$.	
	
		  We now relate $f_p^\beta(G)$ to the probability that a $p$-random subset of $E(\Gamma)$ gives a long $(\alpha,t)$-adapted path. For each set $Z \subseteq V(G)$, let $G(Z)$ denote the subgraph of $G$ induced by $Z$. 
	
	 Recalling notation from the proof of Lemma~\ref{uniformity}, for $X \subseteq E(G)$ we say a vertex $v$ of $G$ is \emph{reachable} with respect to $X$ if $v = r$, or there is an $(\alpha,t)$-adapted path of $G$ (with respect to $X$) having first arc $e$ and last vertex $v$. We write $R(X)$ for the set of all such vertices.  Similarly, for $Y \subseteq E(\Gamma)$, we say a vertex $v$ of $\Gamma$ is \emph{reachable} with respect to $Y$ if there is an $(\alpha,t)$-adapted path of $\Gamma$ (with respect to $Y$) from the root to $v$. Let $R(Y)$ denote the set of all such vertices. Note, for any $X$ and $Y$, that each of the sets $R(X)$ and $\pi(R(Y))$ either is equal to $\{r\}$, or induces a connected subgraph of $G$ containing $r$ and $s$.
	 
	 	Suppose that $X$ is a $p$-random subset of $E(G)$ and $Y$ is a $p$-random subset of $E(\Gamma)$. Let $C_G$ denote the event that $G(R(X))$ contains a circuit, and $C_\Gamma$ denote the event that $G(\pi(R(Y)))$ contains a circuit.

	\begin{claim}
		$\prob(C_G) = \prob(C_{\Gamma})$. 
	\end{claim}
	\begin{proof}[Proof of claim:]
		Let $\cZ'$ denote the family of subsets of $V(G)$ that induce an \emph{acyclic} connected subgraph of $G$ containing $r$ and $s$, and let $\cZ = \cZ' \cup \{\{r\}\}$. The event $C_G$ fails to hold exactly when $R(X) \in \cZ$, so \[1-\prob(C_G) = \sum_{Z \in \cZ} \prob(R(X) = Z).\] Similarly, we have \[1-\prob(C_{\Gamma}) = \sum_{Z \in \cZ} \prob(\pi(R(Y)) = Z).\]
		If $Z = \{r\}$, then clearly $\prob(R(X) = Z) = \prob(\pi(R(Y)) = Z) = 1-p$. Suppose that $Z \in \cZ'$. By acyclicity of $G(Z)$, there is a unique subtree $\Gamma_Z$ of $\Gamma$ that contains the root of $\Gamma$ and satisfies $\pi(V(\Gamma_Z)) = Z$, and moreover $G(Z)$ and $\Gamma_Z$ are isomorphic finite trees. Now $G(Z)$ and $\Gamma_Z$ have the same number of edges, and the number of edges of $G$ with exactly one end in $Z \setminus \{r\}$ is equal to the number of edges of $\Gamma$ with exactly one end in $V(\Gamma_Z)$, so \[\prob(R(X) = Z) = \prob(R(Y) = V(\Gamma_Z)) = \prob(\pi(R(Y)) = Z).\] The claim now follows from the two summations above. 
	\end{proof}

	\begin{claim}
		$\prob(C_{\Gamma}) \ge \delta$. 
	\end{claim}
	\begin{proof}[Proof of claim:]
		By Lemma~\ref{uniformity}, the tree $\Gamma$ contains, with probability at least $\delta$, a length-$|V(G)|$ path $[v_1,v_2,\dotsc]$ that is $(\alpha,t)$-adapted with respect to $Y$. For any such path, there must be some $i < j$ so that $\pi(v_i) = \pi(v_j)$; now $\{\pi(v_i),\pi(v_{i+1}), \dotsc, \pi(v_j)\}$ is the vertex set of a closed non-backtracking walk of $G(\pi(R(Y)))$, which must contain a circuit. This implies the claim.
	\end{proof}
	\begin{claim}
		$f_p^{\beta}(G) \ge \prob(C_G)$.
	\end{claim}
	\begin{proof}[Proof of claim:]
		Suppose that $X \subseteq E$ satisfies $C_G$; i.e. $G(R(X))$ contains a circuit $C$. It suffices to show that $X$ contains a $\beta$-fraction of the edges of some circuit of $G$. Let $V(C) = [x_0,x_1,\dotsc,x_m]$, where $x_0$ is the end of a shortest $(\alpha,t)$-adapted path $P_0$ from $r$ to $V(C)$. If there is some $i \in\{1,\dotsc, m\}$ such that there exists in $G$ an $(\alpha,t)$-adapted path $P_i$ from $r$ to $x_i$ not containing $x_{i-1}$ and an $(\alpha,t)$-adapted path  $P_{i-1}$ from $r$ to $x_{i-1}$ not containing $x_i$, then $E(P_i) \cup E(P_{i-1}) \cup \{x_{i-1}x_i\}$ contains a circuit $C'$ of $G$. Moreover, this circuit is the disjoint union of the edge $x_{i-1}x_i$, a set of subpaths that are $\alpha$-adapted with respect to $X$, and at most two extra subpaths each of length at most $t_{|V(G)|}$ (these two subpaths are `partial' subpaths arising because the last intersection point of $P_{i-1}$ and $P_i$ need not cleanly divide these paths into a union of $\alpha$-dense subpaths), so $|X \cap E(C')| \ge \alpha|E(C')| - 2 t_{|V(G)|}-1$. Now $G = G_k$, so $|E(C')| \ge d(G) \ge k$ and $t_{|V(G)|} \le \sqrt{k}$, giving
		\[\tfrac{|X \cap E(C')|}{|E(C')|} \ge \alpha - \tfrac{2t_{|V(G)|}+1}{|E(C')|} \ge \alpha - \tfrac{2\sqrt{k} + 1}{k} \ge \beta,\]
		so $X$ contains a $\beta$-fraction of the edges of $C'$.
				
		If no such $i$ exists, then an easy inductive argument implies for each $j \ge 1$ that every $(\alpha,t)$-adapted path from $r$ to $x_j$ passes through $x_{j-1}$, so  $E(P_0)  \cup E(C) - \{x_0x_m\}$ is the edge set of an $(\alpha,t)$-adapted path from $r$ to $x_m$. By a similar argument to the above, we have $|E(C) \cap X| \ge \alpha |E(C)| - 2 t_{|V(G)|} - 1$, and thus $X$ contains a $\beta$-fraction of the edges of $C$, giving the claim.
	\end{proof}
	The last three claims give $f_p^{\beta}(G) \ge \delta$, implying the theorem. 	
\end{proof}

\section{The Threshold}

We now prove Theorems~\ref{maintechnical} and~\ref{mainintro}. Recall that, if $C$ is the cycle code of a graph $G$, then the probability of a maximum-likelihood decoding error in $C$ over a channel of bit-error rate $p \in (0,\tfrac{1}{2})$ is exactly the parameter $f_p^{1/2}(G)$ of the previous section. We use this fact to derive Theorem~\ref{maintechnical} (restated here) from Theorem~\ref{maintech}. 

\begin{theorem}\label{mainrev}
	If $R \in (0,1)$ and $\cG$ is the class of cycle codes of graphs, then $\theta_{\cG}(R) \le \tfrac{1}{2}\left(1 - \sqrt{1 - \tfrac{1}{\lambda^2}}\right)$, where $\lambda = \lambda_*(\tfrac{2}{1-R})$.  
\end{theorem}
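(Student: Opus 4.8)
The plan is to derive Theorem~\ref{mainrev} from Theorem~\ref{maintech} by a direct translation of parameters, treating the two halves of the definition of $\theta_{\cG}(R)$ separately. Write $\mu_0 = \tfrac{2}{1-R}$, so that a large connected graph $G$ has a cycle code of rate close to $1 - \tfrac{2}{\mu(G)}$, and $\mu(G) \ge \mu_0$ corresponds (asymptotically) to rate at least $R$. Set $\lambda = \lambda_*(\mu_0)$ and let $p^* = \tfrac{1}{2}\left(1 - \sqrt{1 - \tfrac{1}{\lambda^2}}\right)$ be the claimed bound. The key elementary observation is the identity $\exp\!\big(D(\tfrac{1}{2} \Vert p)\big) = \big(4p(1-p)\big)^{-1/2}$, obtained by substituting $\alpha = \tfrac12$ into the definition of $D(\alpha \Vert p)$; thus $\exp\!\big(D(\tfrac{1}{2}\Vert p)\big) < \lambda$ holds precisely when $4p(1-p) > \lambda^{-2}$, i.e.\ when $p \in (p^*, \tfrac12)$. (Here one uses that $4p(1-p)$ is increasing on $(0,\tfrac12)$ and takes the value $\lambda^{-2}$ exactly at $p = p^*$.)

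First I would handle the upper-bound direction, which is the substantive one. Fix $p \in (p^*, \tfrac12)$; I must produce $\eps > 0$ so that every code in $\cG$ of rate at least $R$ has ML-decoding error probability at least $\eps$. A code of rate at least $R$ is the cycle code of a connected graph $G$ with $\tfrac{|V(G)|}{|E(G)|} \le 1 - R$ plus a lower-order term; for all sufficiently large such graphs this forces $\mu(G) \ge \mu_0'$ for any fixed $\mu_0' < \mu_0$, and for the finitely many small graphs the error probability is bounded below by a positive constant anyway (e.g.\ via $f_p^{1/2}(G) \ge p^{d(G)}$ with $d(G)$ bounded). Since $p > p^*$ and $p^* = \tfrac12(1-\sqrt{1-\lambda_*(\mu_0)^{-2}})$ depends continuously and monotonically on $\mu_0$ (as $\lambda_*$ is nondecreasing), I can choose $\mu_0' < \mu_0$ close enough that still $\exp(D(\tfrac12 \Vert p)) < \lambda_*(\mu_0')$; then Theorem~\ref{maintech} with $\beta = \tfrac12$ and this $\mu_0'$ gives a uniform $\delta > 0$ with $f_p^{1/2}(G) \ge \delta$ for all such $G$. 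Taking $\eps$ to be the minimum of $\delta$ and the finitely many small-graph contributions finishes this direction.

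For the other direction I must check that there is no contradiction with the definition of $\theta_{\cG}$: for $p < p^*$ the definition only requires the existence of \emph{some} code achieving small error probability, which the theorem statement does not assert and which is supplied instead by the external constructions (Z\'emor--Tillich, Alon--Bachmat) in the cases where sharpness is claimed; so nothing further is needed here — the inequality $\theta_{\cG}(R) \le p^*$ follows purely from the "converse'' half established above, since if $\theta_{\cG}(R)$ were larger than $p^*$ there would be $p \in (p^*, \theta_{\cG}(R))$ with a rate-$R$ code family of vanishing error, contradicting the uniform lower bound $\delta$.

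The main obstacle is the careful management of the rate-versus-average-degree conversion: "rate at least $R$'' is an inequality on $\tfrac{|E|-|V|+1}{|E|}$, not exactly on $\mu(G) = \tfrac{2|E|}{|V|}$, so one must either pass to a subsequence of ever-larger graphs and absorb the $\tfrac{1}{|E|}$ correction term, or argue that a code family witnessing $\theta_{\cG}(R) > p^*$ cannot consist of boundedly many graphs (since $f_p^{1/2}$ is bounded below on any finite set of graphs), hence consists of arbitrarily large graphs whose average degree then exceeds any $\mu_0' < \mu_0$. Once that bookkeeping is pinned down, the rest is the one-line entropy computation $\exp(D(\tfrac12\Vert p)) = (4p(1-p))^{-1/2}$ together with solving $4p(1-p) = \lambda^{-2}$ for $p$, and a direct appeal to Theorem~\ref{maintech} and Lemma~\ref{lambdastar}.
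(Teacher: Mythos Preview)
Your proposal is correct and matches the paper's approach: identify $\exp(D(\tfrac12\Vert p)) = (4p(1-p))^{-1/2}$, reduce to Theorem~\ref{maintech} with $\beta = \tfrac12$ at a slightly lowered average-degree threshold, and absorb the small-graph exceptions via the crude bound $f_p^{1/2}(G) \ge p^{|E(G)|}$. The only point to firm up is the choice of $\mu_0'$: you appeal to continuity of $\lambda_*$, which is nowhere established; the paper instead picks $\lambda_0 \in (\exp(D(\tfrac12\Vert p)),\lambda)$, sets the reduced threshold to $\lambda_0 + 1$, and uses Lemma~\ref{lambdastar} (as your closing remark hints) to get $\lambda_*(\lambda_0+1) \ge \lambda_0$ directly, bypassing any continuity claim.
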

\begin{proof}
	Fix $R \in (0,1)$, let $\mu = \tfrac{2}{1-R}$ and let $\theta = \tfrac{1}{2}\left(1 - \sqrt{1 - \tfrac{1}{\lambda^2}}\right)$, where $\lambda = \lambda_*(\mu)$. Note that $\exp(D(\half\Vert \theta)) = \lambda \ge \mu-1 > 1$ by Lemma~\ref{lambdastar}. It is enough to show that for all $p \in (\theta,\tfrac{1}{2})$ there is some $\eps > 0$ such that the probability of an error in maximum-likelihood decoding of a cycle code of rate at least $R$, over a binary symmetric channel with bit-error rate $p$, is at least $\eps$. 
	
	Let $p \in (\theta,\tfrac{1}{2})$. Since $p > \theta$ we have $\exp(D(\half \Vert p)) < \lambda$; let $\lambda_0 \in (\exp(D(\half \Vert p)),\lambda)$ and let $\mu_0 = \lambda_0 +1$. Let $\delta = \delta(\mu_0,p,\half)$ be given by Theorem~\ref{maintech} and set $\eps = \min(\delta,p^b)$, where $b = \tfrac{2\mu\mu_0}{\mu-\mu_0}$. 
	
	 Let $C$ be a cycle code of rate $R(C) \ge R$ and let $G$ be a connected graph whose cycle code is $C$. Note, since $R > 0$, that $G$ contains a circuit, so $f_p^{1/2}(G) \ge p^{|E(G)|}$. If $\mu(G) \ge \mu_0$ then $f_p^{1/2}(G) \ge \delta \ge \eps$ by Theorem~\ref{maintech}. Otherwise
	 \[1-\tfrac{2}{\mu} = R \le R(C) = 1 - \tfrac{2}{\mu(G)} + \tfrac{1}{|E(G)|} < 1 - \tfrac{2}{\mu_0} + \tfrac{1}{|E(G)|}, \]
	 so $|E(G)| < \tfrac{2\mu\mu_0}{\mu-\mu_0} = b$ and thus $f_p^{1/2}(G) \ge p^b \ge \eps$, as required. 
	\end{proof}
	
	Finally, we restate and prove Theorem~\ref{mainintro}.
	\begin{theorem}
	If $\cG$ is the class of cycle codes of graphs and $R \in (0,1)$, then $\theta_{\cG}(R) \le \tfrac{(1-\sqrt{R})^2}{2(1+R)}$. If equality holds, then $R = 1 - \tfrac{2}{d}$ for some $d \in \bZ$. 
	\end{theorem}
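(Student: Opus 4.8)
The plan is to obtain this as a direct consequence of Theorem~\ref{mainrev} (equivalently Theorem~\ref{maintechnical}) together with the lower bound on $\lambda_*$ recorded in Lemma~\ref{lambdastar}. Fix $R \in (0,1)$ and write $\mu = \tfrac{2}{1-R}$, so that $R = 1 - \tfrac{2}{\mu}$ and a short rearrangement gives $\mu - 1 = \tfrac{1+R}{1-R}$; note $\mu > 2$ since $R \in (0,1)$. Theorem~\ref{mainrev} says $\theta_{\cG}(R) \le \tfrac{1}{2}\bigl(1 - \sqrt{1-1/\lambda^2}\bigr)$ where $\lambda = \lambda_*(\mu)$, and Lemma~\ref{lambdastar} says $\lambda \ge \mu-1 > 1$, with equality only if $\mu \in \bZ$.

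First I would record the elementary fact that the function $\psi(x) = \tfrac{1}{2}\bigl(1 - \sqrt{1-1/x^2}\bigr)$ is strictly decreasing on $(1,\infty)$ (differentiate: $\psi'(x) = -\tfrac{x^{-3}}{2\sqrt{1-x^{-2}}} < 0$ for $x > 1$, the square root being well-defined there). Hence $\theta_{\cG}(R) \le \psi(\lambda) \le \psi(\mu-1)$. Next comes the routine algebraic simplification of $\psi(\mu-1)$: substituting $\mu - 1 = \tfrac{1+R}{1-R}$ gives
\[
1 - \frac{1}{(\mu-1)^2} = \frac{(1+R)^2 - (1-R)^2}{(1+R)^2} = \frac{4R}{(1+R)^2},
\]
so (taking the positive square root, legitimate since $R > 0$) $\sqrt{1 - 1/(\mu-1)^2} = \tfrac{2\sqrt{R}}{1+R}$ and therefore $\psi(\mu-1) = \tfrac{1+R-2\sqrt R}{2(1+R)} = \tfrac{(1-\sqrt R)^2}{2(1+R)}$. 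This establishes the stated inequality $\theta_{\cG}(R) \le \tfrac{(1-\sqrt R)^2}{2(1+R)}$.

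For the equality clause, suppose $\theta_{\cG}(R) = \tfrac{(1-\sqrt R)^2}{2(1+R)} = \psi(\mu-1)$. Then the chain $\theta_{\cG}(R) \le \psi(\lambda) \le \psi(\mu-1)$ must hold with equality throughout, so $\psi(\lambda) = \psi(\mu-1)$; since $\psi$ is strictly decreasing and both arguments lie in $(1,\infty)$, this forces $\lambda = \mu-1$, i.e. $\lambda_*(\mu) = \mu-1$, and Lemma~\ref{lambdastar} then gives $\mu \in \bZ$. Putting $d = \mu$ yields $R = 1 - \tfrac{2}{d}$ with $d \in \bZ$, as required. Essentially every step here is a deduction from results proved earlier, so there is no genuine obstacle; the only points demanding any care are verifying strict monotonicity of $\psi$ and keeping track of the correct sign of the square root, both of which are immediate.
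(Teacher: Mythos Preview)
Your proof is correct and follows essentially the same route as the paper's: apply Theorem~\ref{mainrev} together with Lemma~\ref{lambdastar}, use monotonicity of $x \mapsto \tfrac{1}{2}(1-\sqrt{1-1/x^2})$ to pass from $\lambda$ to $\mu-1$, and simplify. Your version is in fact more carefully written than the paper's, which leaves the monotonicity and the algebraic simplification implicit (and contains a typo under the square root).
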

	\begin{proof}
		Let $\mu = \tfrac{2}{1-R}$ and $\lambda = \lambda_*(\mu)$. By Lemma~\ref{lambdastar} we have $\lambda \ge \mu-1$ with equality if and only if $\mu \in \bZ$. Theorem~\ref{mainrev} thus gives $\theta_{\cG}(R) \le \half\left(1-\sqrt{1 + \tfrac{2}{\mu-1}}\right)$, with equality only if $\mu \in \bZ$: that is, if and only if $R = 1 - \tfrac{2}{d}$ for some $d \in \bZ$. The result now follows from the definition of $\mu$ and a computation. 
	\end{proof}

\subsection*{Acknowledgements} We thank the two anonymous referees for their helpful suggestions that improved the quality of the paper.

\section*{References}

\newcounter{refs}

\begin{list}{[\arabic{refs}]}
{\usecounter{refs}\setlength{\leftmargin}{10mm}\setlength{\itemsep}{0mm}}

\item\label{ab06}
N. Alon and E. Bachmat, 
Regular graphs whose subgraphs tend to be acyclic,
Random Struct. Algo. 29 (2006), 324--337.

\item\label{alon}
N. Alon, S. Hoory and N. Linial,
The Moore Bound for Irregular Graphs,
Graph Combinator. 18 (2002), 53--57. 

\item\label{bmt78}
E.R. Berlekamp, R.J. McEliece and H.C.A. van Tilborg, 
On the inherent intractability of certain coding problems, 
IEEE Trans. Inform. Theory 24 (1978), 384--386.

\item\label{dz}
L. Decreusefond and G. Z\'{e}mor,
On the error-correcting capabilities of cycle codes of graphs,
Combin. Probab. Comput. 6 (1997), 27--38. 

\item\label{diestel}
R. Diestel, 
Graph Theory, 
Springer, 2000. 

\item\label{ggw}
J. Geelen, B. Gerards and G. Whittle, The highly connected matroids in minor-closed classes,
Ann. Comb. 19 (2015), 107--123. 

\item\label{gr}
C. Godsil and G. Royle, 
Algebraic Graph Theory,
Springer, 2001. 

\item\label{gelfand}
I. Gelfand,
Normierte ringe,
Rech. Math. [Mat. Sbornik] N.S., 9 (1941), 3--24

\item\label{lyons}
R. Lyons, 
Random walks and percolation on trees,
Ann. Probab. 18 (1990), 931--958.

\item\label{ms77} 
F. J. MacWilliams and N. J. A. Sloane, 
The Theory of Error-Correcting Codes,
Amsterdam, The Netherlands: North-Holland, 1977.

\item\label{nvz14}
P. Nelson and Stefan H.M. van Zwam, 
On the existence of asymptotically good linear codes in minor-closed classes,
IEEE Trans. Inform. Theory 61 (2015), 1153--1158. 

\item\label{nh81}
S.C. Ntafos and S.L. Hakimi, 
On the complexity of some coding problems, 
IEEE Trans. Inform. Theory 27 (1981), 794--796. 

\item\label{zt97}
J-P. Tillich, G. Z\'emor, 
Optimal cycle codes constructed from Ramanujan graphs, 
SIAM J. Discrete Math 10 (1997), 447--459. 
\end{list}

\end{document}